\theoremstyle{plain}  
\newtheorem{thm}{Theorem}[section] 
\newtheorem{prop}[thm]{Proposition}
\newtheorem{definition}[thm]{Definition}
\newtheorem{rem}[thm]{Remark}
\title{A Kernel Score Perspective on Forecast Disagreement\\ and the Linear Pool}
\author{Fabian Krüger\thanks{Karlsruhe Institute of Technology, Institute of Statistics, fabian.krueger@kit.edu. I thank Sam Allen and seminar participants at Berlin (Ökonometrischer Ausschuss) and Erasmus University Rotterdam for helpful comments.}}
\begin{document}

\maketitle

\begin{abstract}
This paper generalizes several results on linear pooling from squared error loss to all kernel scores. The latter are a rich family of scoring rules that covers point and distribution forecasts for univariate and multivariate, discrete and continuous settings. Its members include the Continuous Ranked Probability Score for univariate distribution forecasting and the Energy Score for multivariate distribution forecasting. 
Our results indicate that forecast disagreement (measured as the average pairwise divergence of all component distributions) has important implications for the linear pool's performance. The results are useful for understanding and designing linear pools in general combination settings. In particular, they motivate using the linear pool (as opposed to other combination formulas) and yield a novel condition under which equal combination weights are optimal under a given kernel scoring rule. 
\end{abstract}

\section{Introduction}
\label{sec:intro}
Forecast combination is a widely popular method that applies to various types of forecasts, including point forecasts and forecast distributions. Empirically, combinations have repeatedly been found to perform well relative to using any individual forecasting method \citep{WangEtAl2023}. 

Combinations of forecast distributions often take an appealingly simple linear form proposed by \cite{Stone1961}. This form is also known as the linear (prediction) pool. A strand of literature including \cite{genest1986combining}, \cite{HallMitchell2007}, \cite{GewekeAmisano2011}, \cite{GneitingRanjan2013}, \cite{LichtendahlEtAl2013}, \cite{BerrischZiel2023} and \cite{Taylor2025} analyzes the linear pool, and compares it to other combination methods such as quantile-based approaches.

The present paper contributes to this literature by studying the linear pool for kernel scores \citep{Dawid2007,GneitingRaftery2007}, a rich class of proper scoring rules that covers point or distribution forecast of discrete or continuous, univariate or multivariate outcomes. Table \ref{tab:scores} lists some prominent examples, including squared error for point forecasts and the Continuous Ranked Probability Score \citep[CRPS;][]{MathesonWinkler1976} for distributions. Our analysis shows that several well-known results on linear pooling under squared error generalize to all kernel scores. This is potentially surprising since kernel scores cover settings that seem distinct from an applied perspective.

\begin{table}
\begin{tabular}{lll}
Scoring rule & Outcome type & Interested in \\ \toprule
Squared error (SE) & Univariate, quantitative & Mean \\
Multivariate SE  & Multivariate, quantitative & Mean \\
CRPS & Univariate, quantitative & Full distribution \\
Energy Score  & Multivariate, quantitative & Full distribution \\	
Brier Score & Univariate, categorical (unordered) & Full distribution  \\
Ranked Probability Score (RPS) & Univariate, categorical (ordered) & Full distribution \\ \bottomrule
\end{tabular}
\caption{Examples of kernel scores. See Appendix \ref{sec:specific} for details. \label{tab:scores}}
\end{table} 

Our results can be grouped into two categories. The first group describes mechanical properties of forecast combination under any kernel score. In particular, Proposition \ref{prop:d} shows that the linear pool maximizes centrality among the $n$ component distributions when distance is measured in terms of kernel scores. This motivation distinguishes the linear pool from other forms of combination such as quantile averaging. We further argue that the linear pool's expected score (i.e., its entropy) can usefully be decomposed as the sum of two components capturing (i) the average entropy of the component distributions, and (ii) disagreement between the component distributions. This disagreement term coincides with the performance difference between the pool and the component distributions (see Proposition \ref{prop1}). For kernel scores, disagreement hence quantifies the gains from forecast combination. 

Our second group of results derives implications for the performance of linear pools. In Proposition \ref{prop:equidist}, we present sufficient conditions for optimality of equal weights, nesting existing conditions for point forecasts. Motivated by the specific structure of kernel scores, we consider dynamic weights that reflect information on forecast disagreement, instead of being constant over time (Remark \ref{remark:csr}). We further show that a linear pool of calibrated componenents is underconfident, with the amount of underconfidence depending on forecast disagreement (Proposition \ref{prop:under}). In order to put the properties of linear pooling in perspective, we present novel results on quantile-based forecast combination which has become a popular alternative to linear pooling \citep[see e.g.][]{FakoorEtAl2023}. While quantile-based combination can have better calibration properties than linear pooling (Proposition \ref{lpvsqp}), a quantile-based combination of auto-calibrated components fails to be auto-calibrated (Proposition \ref{prop:qpac}), mirroring existing results for linear pooling. 

In macroeconomics and finance, many studies consider notions of disagreement to either measure economic conditions (see \citealt{ClarkMertens2024} and the references therein), or to test economic theories (see e.g. \citealt{CoibionEtAl2012}). As a side effect, the results of this paper suggest principled measures of forecast disagreement for various applied settings. In particular, we cover the case of disagreement among probability distributions (rather than point forecasts) recently studied by \cite{CumingsEtAl2021} and \cite{MitchellEtAl2024}. In Section \ref{sec:illustration}, we illustrate a new measure of disagreement among bivariate forecast distributions that are based on either surveys or a Bayesian vector autoregressive model. 

The present paper complements recent independent work by \cite{AllenEtAl2024} on linear pooling for kernel scores. They consider computational aspects (see their Proposition 2), a linear pool variant based on order statistics (their Section 4.2) and empirical analysis of weather forecasts (their Section 5). By contrast, our focus is on generalizing forecast combination results obtained under squared error loss, and the themes we study -- in particular, forecast disagreement -- are motivated by economic applications.

The paper is structured as follows: Section \ref{sec:setup} describes our formal setup. Section \ref{sec:d_motiv} presents a disagreement-based motivation of the linear pool. Section \ref{sec:entropy} presents a decomposition of the linear pool's entropy. Section \ref{sec:performance} relates the linear pool's ex-post performance to its disagreement component. Section \ref{sec:implications} studies statistical implications of the aforementioned results. Section \ref{sec:simulation} presents a Gaussian example. Section \ref{sec:empirical} provides empirical illustrations using bivariate forecast distributions for inflation and univariate forecast distributions for exchange rates. Section \ref{sec:discussion} concludes with a discussion. The appendix contains specific formulas for various scoring rules, proofs, computational details and additional figures. \textsf{R} code accompanying the paper is available at \url{https://gitlab.kit.edu/fabian.krueger/kernel_pool_replication}.

\section{Formal Setup} \label{sec:setup}

\subsection{Scoring Rules}

A scoring rule $S: \mathcal{F} \times \Omega \rightarrow \mathbb{R} \cup \{\infty\}$ assigns a numerical score, given a forecast distribution $F \in \mathcal{F}$ and an outcome $y \in \Omega$. We use scoring rules in negative orientation, such that a smaller score indicates a better forecast. Throughout this paper, we assume that the class $\mathcal{F}$ of distributions is convex, such that it contains any linear combination $w F_1 + (1-w)F_2$ of two members $F_1, F_2$, where $0 \le w \le 1$. 

Suppose that the predictand is distributed according to $F \in \mathcal{F}$. Then the expected score when stating the forecast $H$ is given by 
$$\mathbb{E}_F[S(H, X)] = \int_\Omega S(H, x)~\text{d}F(x),$$
where $\mathbb{E}_F$ denotes expectation with respect to a random variable (in this case, $X$) that is distributed according to $F$. Throughout much of this paper, we will treat distributions such as $F$ and $H$ as well as combination weights $w_1, w_2, \ldots, w_n$ as fixed, non-random objects. Exceptions occur in Sections \ref{sec:implications} and \ref{sec:simulation}, where we consider a `prediction space' framework in which forecasts and outcomes follow a joint distribution.  

The score divergence $d_S(H,F)$ measures the difference in expected scores when stating $H$, as opposed to the actual distribution $F$. It is given by 
\begin{eqnarray*}
d_S(H,F) &=& \mathbb{E}_F[S(H, X)]-\mathbb{E}_F[S(F, X)] \\
&=& \int_\Omega \left[S(H, x)-S(F,x)\right]~\text{d}F(x).
\end{eqnarray*}
A proper scoring rule $S$ satisfies $d_S(H, F) \ge 0$ for all $F, H \in \mathcal{F}$. Proper scoring rules thus incentivize truthful and accurate forecasting: A forecaster whose beliefs are represented by $F$ cannot do (strictly) better by reporting a forecast distribution other than $F$. \cite{ThorarinsdottirEtAl2013} argue that divergence functions $d_S(H,F)$ based on proper scoring rules are useful for measuring the discrepancy between two distributions $F$ and $H$ in applications. Below we will argue that divergence functions based on kernel scores, a subfamily of proper scores, are particularly useful due to their symmetry property that $d_S(H,F) = d_S(F,H)$. 

The entropy $\mathbb{E}_F[S(F,X)]$ represents a forecaster's expected score when reporting $F$. Given that $F$ is an optimal forecast, this is the best (i.e., smallest) expected score that can be attained. Entropy hence measures the uncertainty implicit in $F$. In the following, we will occasionally use the shorter notation $\text{ent}(F) := \mathbb{E}_F[S(F,X)]$ to denote the entropy of $F$.

\subsection{Kernel Scores}\label{sec:kernel}

We next describe kernel scores, a rich family of scoring rules proposed by \cite{GneitingRaftery2007} and \cite{Dawid2007} that relates to statistical concepts of energy \citep{SzekelyRizzo2017} and to kernel methods in machine learning \citep[c.f.][Section 2.2]{AllenEtAl2024}. Appealingly, kernel scores can accommodate very general (Hausdorff) outcome spaces $\Omega$, containing e.g. univariate or multivariate, discrete or continuous outcome variables. 

Our setup mostly follows \cite{Gneiting2012}. Let $L: \Omega \times \Omega \rightarrow [0, \infty)$ be a nonnegative function that is symmetric in its two arguments, with $L(z, z) = 0$ for all $z \in \Omega$, and the property that 
$$\sum_{i=1}^n\sum_{j=1}^n c_ic_j L(x_i, x_j) \le 0$$
for all $n \in \mathbb{N}, x_1, x_2, \ldots, x_n \in \Omega$ and $c_1, c_2, \ldots, c_n \in \mathbb{R}$ such that $\sum_{i=1}^n c_i = 0$. The function $L$ is called a negative definite kernel. Based on $L$, one can construct the scoring rule 
\begin{equation}
S_L(F, y) = \mathbb{E}_F[L(X, y)]- \frac{1}{2} \mathbb{E}_F[L(X, \widetilde{X})],\label{score}
\end{equation}
where $X$ and $\widetilde{X}$ are understood to be two independent draws from $F$. Scoring rules from the family in (\ref{score}) are proper with respect to the class $\mathcal{F}$ of Radon probability measures on $\Omega$ for which the expectation $\mathbb{E}_F[L(X, \widetilde X)]$ is finite \citep[p.~15]{Gneiting2012}. Under these conditions, kernel scores satisfy $S_L(F, y) \ge 0$ for all $F \in \mathcal{F}$ and $y \in \Omega$, with $S_L(F,y) = 0$ if $F = F_\delta$ has point mass at $y$ \citep[Theorem 2.4]{Gneiting2012}.

As detailed in Appendix \ref{sec:specific}, the family at (\ref{score}) includes popular scoring rules such as the squared error, Brier score, Continuous Ranked Probability Score (CRPS) or Energy Score, which correpond to specific choices of $\Omega$ and $L$. For a scoring rule $S_L$ as in (\ref{score}), the entropy function is given by 
\begin{equation}
\mathbb{E}_F [S_L(F, X)] = \frac{1}{2} \mathbb{E}_F[L(X, \widetilde{X})].\label{entropy}
\end{equation}
The special case where $F = F_\delta$ places point mass on a single element $y \in \Omega$ results in the minimal entropy of zero.

For a kernel scoring rule, the divergence between two distributions $H, F$ is given by
\begin{equation}
d_{S_L}(H,F) = E_{F, H}[L(X, \widetilde{X})] - \frac{1}{2} \mathbb{E}_H[L(X,\widetilde{X})] - \frac{1}{2} \mathbb{E}_F[L(X,\widetilde{X})]; \label{divergence}
\end{equation}
here $E_{F, H}[L(X, \widetilde{X})]$ indicates that the expected value is with respect to two independent random variables $X \sim F$ and $\widetilde{X} \sim H$. Importantly, this divergence function is symmetric with respect to the two distributions $F$ and $H$. 

While rich, the family of kernel scores at (\ref{score}) does not include all proper scoring rules that are popular in practice. 
In particular, the tick (or pinball, or piecewise linear) loss function underlying quantile regression \citep{KoenkerBassett1978} features an asymmetric divergence function and is thus not a kernel score. Interestingly, the divergence function of the quantile score is asymmetric even in the case of the median, so that absolute error is not a kernel score.\footnote{To see this, consider an example where $F$ is a standard normal distribution and $H$ is a normal distribution with mean $1$ and variance $2$. Then $d_S(H,F) = \int_{\Omega} \left(|x-1|-|x|\right) \text{d}F(x) \neq d_S(F,H) = \int_{\Omega} \left(|x|-|x-1|\right) \text{d}H(x),$ i.e. the divergence function is asymmetric. \cite{BentzienFriederichs2014} provide an expression of the divergence function for general quantile levels, which is also asymmetric in general.} Furthermore, the logarithmic score is not a kernel score \citep{AllenEtAl2023}. This can again be deduced from the fact that its (Kullback-Leibler) divergence function is asymmetric.

Our definition of kernel scores at (\ref{score}) in terms of a negative definite kernel $L$ differs from \citeauthor{AllenEtAl2024}'s definition in terms of a positive definite kernel (see their Equation 4). Both definitions can yield equivalent expressions of a given kernel score like the CRPS, but the respective kernel functions differ. Using their definition, empirical weight optimization problems directly takes the form of a convex quadratic program (see their Proposition 2), which is convenient for software implementation. At the same time, our definition yields more easily interpretable connections to forecast disagreement, a key theme of this paper. In Appendix \ref{emp_weight}, we show how to translate our weight optimization problem into a convex quadratic format.

\section{A Divergence-Based Motivation for Linear Pooling}\label{sec:d_motiv}

This section uses the notion of divergence to motivate linear pooling, as opposed to other forms of forecast combination. Here we initially consider a finite outcome space $\Omega$, with $|\Omega| = n_\Omega$. The elements of $\Omega$ could be univariate or multivariate, quantitative or categorical. As noted by \citet[Section 4]{AllenEtAl2024}, a finite outcome space considerably simplifies the expected value expressions relevant for kernel scores, and aligns well with the fact that many forecasting models (such as meteorological ensembles or models estimated via Bayesian techniques) take the form of simulated or empirically observed samples. Therefore, and since $n_\Omega$ can be arbitrarily large, the assumption of a finite outcome space is not very restrictive from an applied perspective. 

We identify a forecast distribution $F$ with an $n_\Omega \times 1$ vector $p$ containing predicted probabilities of all outcomes. Accordingly, the family $\mathcal{F}$ of forecast distributions consists of the $n_\Omega$-dimensional probability simplex $\text{PS}^{n_\Omega}$, i.e. the set of all $n_\Omega$ dimensional vectors whose elements are nonnegative and sum to one. Let $F_w = \sum_{i=1}^n w_i F_i$ be a linear pool of $n$ forecast distributions $(F_i)_{i=1}^n$, with $F_i \in \mathcal{F}$ and weight $w_i$ placed on the $i$th component, such that $w = (w_1, w_2, \ldots, w_n)' \in \text{PS}^n$. Clearly, $F_w \in \mathcal{F}$.

We further define the matrix $\underline{L}$ whose $[j,l]$ element is given by $L(x_{(j)},x_{(l)})$, where $x_{(j)}$ and $x_{(l)}$ are the $j$th and $l$th unique elements of $\Omega$. In this setup, we have
\begin{eqnarray}
	\mathbb{E}_{F,H}\big[L(X,\widetilde{X})\big] &=& p' \underline{L}~h, \label{example} 
\end{eqnarray}
where $h$ is the $n_\Omega$-vector of probabilities corresponding to some forecast distribution $H$. Based on (\ref{divergence}) and (\ref{example}), the average divergence between the component distributions $F_1, F_2, \ldots, F_n$ and $H$ is given by
\begin{eqnarray}
	D_\text{gen}(h) &= &\sum_{i=1}^n w_i \left\{{p_i}' \underline{L} h - \frac{1}{2} h' \underline{L} h - \frac{1}{2} {p_i}' \underline{L} p_i\right\};
\end{eqnarray}
where the vectors $p_1, \ldots, p_n$ correspond to the component distributions, and the notation $D_\text{gen}(h)$ indicates disagreement around a generic vector $h$ of probabilities. 

We next ask which vector $h$ of probabilities minimizes $D_\text{gen}(h)$. In the context of proper (but not necessarily kernel) scoring rules, \cite{NeymanRoughgarden2023} call this minimizer the `quasi-arithmetic pool'. The latter is the most central choice of probabilities, a desirable feature if the combination aims to represent a consensus of the individual forecasts $F_1, F_2, \ldots, F_n$. If $S_L$ is the Brier score and the predictand is univariate and categorical, the quasi-arithmetic pool is known to coincide with the linear pool (\citealt[Proposition 3]{Pettigrew2019}; \citealt[Section 1.3.3]{NeymanRoughgarden2023}). The following result shows that the equivalence between the quasi-arithmetic pool and the linear pool generalizes to all kernel scores. 
\begin{prop}
	Assume that the scoring rule $S = S_L$ is a kernel score, and that the outcome space $\Omega$ is finite, with $|\Omega| = n_\Omega$. Then the linear pool $F_w$ minimizes the average divergence to its components. That is, $D_\text{gen}(h) -D_{\text{gen}}(p_w) \ge 0$ for every vector $h$ of probabilities over $\Omega$.\label{prop:d}
\end{prop}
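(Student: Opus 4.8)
The plan is to reduce $D_\text{gen}(h)$ to a simple quadratic form in $h$, complete the square around $p^\omega$, and then invoke the defining inequality of a negative definite kernel. First I would exploit the two structural facts that $p^\omega = \sum_{i=1}^n \omega_i p^i$ and $\sum_{i=1}^n \omega_i = 1$ to collapse the outer sum over $i$ in the definition of $D_\text{gen}(h)$. The term $\sum_i \omega_i (p^i)^T \underline{L} h$ becomes $(p^\omega)^T \underline{L} h$; the term $-\tfrac{1}{2}\sum_i \omega_i\, h^T \underline{L} h$ becomes $-\tfrac{1}{2} h^T \underline{L} h$; and the term $-\tfrac{1}{2}\sum_i \omega_i (p^i)^T \underline{L} p^i$ is a constant $C$ not depending on $h$. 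This yields $D_\text{gen}(h) = (p^\omega)^T \underline{L} h - \tfrac{1}{2} h^T \underline{L} h - C$.

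Next I would evaluate this expression at $h = p^\omega$, obtaining $D_\text{gen}(p^\omega) = \tfrac{1}{2}(p^\omega)^T \underline{L}\, p^\omega - C$, and subtract. The constant $C$ cancels, and the surviving terms assemble into a completed square:
$$D_\text{gen}(h) - D_\text{gen}(p^\omega) = -\tfrac{1}{2}\left(h - p^\omega\right)^T \underline{L}\left(h - p^\omega\right).$$

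The final step is the crucial one. Both $h$ and $p^\omega$ are probability vectors over $\Omega$, so each sums to one; hence the vector $c := h - p^\omega$ sums to zero. This is precisely the hypothesis under which the negative definite kernel condition from Section \ref{sec:kernel} guarantees $\sum_{j,l} c_j c_l\, L(x_{(j)}, x_{(l)}) = c^T \underline{L}\, c \le 0$. Therefore $D_\text{gen}(h) - D_\text{gen}(p^\omega) = -\tfrac{1}{2}\, c^T \underline{L}\, c \ge 0$, which is the claim.

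I do not anticipate a genuine obstacle; the argument is short once the right form is found. The only points requiring care are the bookkeeping of the weight-averaging step -- correctly identifying the linear-in-$h$, quadratic-in-$h$, and constant pieces -- and recognizing that the difference completes to $-\tfrac{1}{2}(h-p^\omega)^T\underline{L}(h-p^\omega)$ rather than some less tractable residual. The conceptual heart is simply that the zero-sum property of $h - p^\omega$ unlocks exactly the negative-definiteness inequality that defines the kernel $L$, so that the linear pool emerges as the unconstrained minimizer over the affine space of probability vectors.
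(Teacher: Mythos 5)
Your proposal is correct and follows essentially the same route as the paper's proof: collapse the weighted sum using $p^\omega = \sum_i \omega_i p^i$, complete the square to obtain $D_\text{gen}(h) - D_\text{gen}(p^\omega) = -\tfrac{1}{2}(h-p^\omega)^T \underline{L}(h-p^\omega)$, and apply the negative definite kernel inequality to the zero-sum vector $h - p^\omega$. The only difference is presentational --- you spell out the weight-averaging bookkeeping that the paper leaves implicit.
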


In the special case where $S_L$ is squared error loss and the predictand is univariate and quantitative, Proposition \ref{prop:d} recovers the textbook result that the arithmetic mean minimizes the sum of squared errors.\footnote{Specifically, suppose that the $i$th component distribution has point mass at $x_i \in \Omega$, such that its cumulative distribution function is given by $F_i(z) = \mathbf{1}(z \ge x_i).$ Then the score divergence between $F_w$ and $F_i$ is given by $(\mu_w - x_i)^2$, with $\mu_w = \sum_{i=1}^n w_i x_i$ being the mean implied by $F_w$.} Interestingly, the proposition also covers multivariate quantitative outcomes for which several kernel scores $S_L$ are available (most popularly, the Energy Score). The equivalence between the quasi-arithmetic and linear pools need not hold for proper scoring rules $S$ that are not kernel scores. In particular, if $S$ is the logarithmic scoring rule, the quasi-arithmetic pool is given by the logarithmic pool \citep[Section 1.3.4]{NeymanRoughgarden2023}. However, the latter result hinges on the order of the two arguments of the relevant (Kullback-Leibler) divergence function. When reversing the order of the arguments, the linear pool minimizes the average Kullback-Leibler divergence to the component distributions (\citealt{Abbas2009}, Proposition 1; \citealt{Pettigrew2019}, Proposition 3). This type of sensitivity is a drawback of using an asymmetric divergence function. Conversely, the symmetry of a kernel score's divergence function is appealing.

\section{Entropy of the Linear Pool}\label{sec:entropy}

We now consider linear pooling for a generic (not necessarily discrete) outcome space $\Omega$. As before, a linear pool is given by $F_w = \sum_{i=1}^n w_i F_i$, with $F_i \in \mathcal{F}$ and a weight vector $w \in \text{PS}^n$.

\subsection{Decomposition}

Let $S$ be a scoring rule, and let $\mathcal{F}$ be such that $S$ is proper with respect to $\mathcal{F}$. Then the entropy of $F_w$ can be written as
\begin{eqnarray}
\text{ent}(F_w)  &=& \sum_{i=1}^n w_i~ \mathbb{E}_{F_i}[S(F_w, X)]\nonumber\\ 
	&=& \sum_{i=1}^n w_i \left\{ \mathbb{E}_{F_i}[S(F_w, X)]- \mathbb{E}_{F_i}[S(F_i, X)]  \right\} + \sum_{i=1}^n w_i~ \mathbb{E}_{F_i}[S(F_i, X)] \nonumber\\
	&=& \underbrace{\sum_{i=1}^n w_i~d(F_w, F_i)}_{D = \text{average divergence}} ~+ \underbrace{\sum_{i=1}^n w_i~ \text{ent}(F_i)}_{\text{\text{average entropy of components}}},\label{entropy_pool} 
\end{eqnarray}	
where $D \ge 0$ since the weights $w_i$ are nonnegative and $S$ is proper.

We demonstrate in Appendix \ref{sec:specific} that in the case of squared error, Equation (\ref{entropy_pool}) recovers the famous decomposition of the linear pool's variance as discussed by \cite{wallis2005} and others. Furthermore, \citet[Section 3.4]{ShojaSoofi2017} derive the decomposition at (\ref{entropy_pool}) for the case of the logarithmic score. In this case, 
$S(F, y) = -\log f(y),$ where $f$ is the density associated with $F$, and the corresponding divergence function is known as Kullback-Leibler divergence. Apart from these two special cases for $S$, we are not aware that the decomposition at (\ref{entropy_pool}) has appeared in the literature. The decomposition is consistent with the inequality
\begin{equation}
	\underbrace{\mathbb{E}_{F_w} \left[S(F_w,X)\right]}_{=\text{ent}(F_w)} \ge~\sum_{i=1}^n w_i~\underbrace{\mathbb{E}_{F_i}[S(F_i, X)]}_{=\text{ent}(F_i)}.\label{concave_entropy}
\end{equation}
which holds for any proper scoring rule $S$ \citep[Section 2.1]{GneitingRaftery2007}.\footnote{ \cite{GneitingRaftery2007} define scoring rules in positive orientation, whereas we define them in negative orientation. Hence the word `convex' in their statement (on p.~362) that `a regular scoring rule $S$ is proper [..] if and only if the expected score function [..] is convex [..]' must be replaced by `concave' in our setting.} Equation (\ref{entropy_pool}) shows that the difference between the left and right sides of (\ref{concave_entropy}) is given by $D$. The following result describes some properties of the decomposition at (\ref{entropy_pool}) for the family of kernel scores. 
\begin{prop}
Let $S = S_L$ be a kernel scoring rule, and let $\mathcal{F}$ be such that $S_L$ is proper with respect to $\mathcal{F}$. Then
\begin{itemize}
\item[(a)] The divergences in the first term at (\ref{entropy_pool}) are symmetric, i.e. $d(F_w, F_i) = d(F_i, F_w)$. Furthermore, the pool's entropy $\text{ent}(F_w)$ and the entropy $\text{ent}(F_i)$ of each component are nonnegative. 
\item[(b)] The average divergence $D$ satisfies
\begin{eqnarray}
D &=& \sum_{i=1}^n w_i~d(F_i, F_w) \nonumber\\ 
 &=& \frac{1}{2}\mathbb{E}_{F_w}[L(X,\widetilde{X})] - \frac{1}{2} \sum_{i=1}^n w_i~\mathbb{E}_{F_i}[L(X, \widetilde{X})] \label{divergence_lp} \label{disagreement_lp} \\
&=& \sum_{i=1}^{n-1}\sum_{j>i} w_iw_j~d(F_i, F_j). \label{disagreement_pairs_lp}
\end{eqnarray}	
\end{itemize}
\label{prop0}
\end{prop}
The properties of kernel scores as noted in part (a) are appealing for interpreting the pool's entropy. In particular, the symmetry property is useful since the order of the arguments $F_w$ and $F_i$ to the divergence function $d$ seems arbitrary. \citet[Theorem 19]{WaghmareZiegel2025} show that under an additional technical assumption, kernel scores are the only proper scoring rules with a symmetric divergence function. Nonnegativity of entropy is intuitively appealing if one aims to interpret entropy as uncertainty, all common measures of which are nonnegative.\footnote{In principle, one could enforce positivity of any scoring rule by adding a large constant $C \in \mathbb{R}_+$, which would not affect propriety of the scoring rule. However, this transformation would be at odds with empirical practice and possibly cumbersome to implement.} The logarithmic score, which is not a kernel score as noted above, does not share the advantages mentioned in (a): Its (Kullback-Leibler) divergence function is asymmetric, and its entropy function can be negative. Part (b) of the proposition shows that the average divergence term $D$ can be written in terms of pairwise divergences between component distributions. This result is potentially helpful for computation, e.g. when the components are of a simpler functional form than their linear pool, as in the Gaussian case.

For kernel scores $S = S_L$, we thus argue that the term $D$ at (\ref{disagreement_lp}) and (\ref{disagreement_pairs_lp}) defines a useful measure of average disagreement within the linear pool. 

\subsection{Comparison to Quantile Combination} \label{sec:qp}

In the case of a univariate, continuous predictand, quantile combinations have become a popular alternative to the linear pool.  \citet[Lemma 1 and Proposition 8]{LichtendahlEtAl2013} show that forecast distributions obtained via quantile combination tend to be less dispersed than the linear pool in terms of convex order and (even) moments. We next relate their results to entropy as an alternative measure of uncertainty. To facilitate the comparison, we momentarily assume that the $n$ forecast distributions $F_1,\ldots,F_n$ are strictly increasing, so that the $u$-quantile of $F_w^q$ is unique and equals $\sum_{i=1}^n w_iF_i^{-1}(u),$ where $u \in (0,1)$. 

\begin{prop}
Consider the linear pool $F_w$ and the quantile combination $F_w^q$ using the same weight vector $w \in \text{PS}^n$. Furthermore, assume that the proper scoring rule $S$ is such that $S(F_w, y)$ is convex with respect to $y$ for all $y \in \Omega$. Then $\text{ent}(F_w) \ge \text{ent}(F_w^q)$, i.e. the linear pool's entropy exceeds the entropy of quantile combination.  \label{co_ent}
\end{prop}
If $S$ is squared error, the proposition's convexity assumption holds without any restriction on $F_w$, and the proposition recovers \citeauthor{LichtendahlEtAl2013}'s result that quantile combination yields a forecast distribution with smaller variance. For the CRPS, the convexity assumption also holds without any restriction on $F_w$. Note that the convexity assumption in Proposition \ref{co_ent} is sufficient but not necessary for higher entropy of the linear pool. Furthermore, while untypical in practice, it is possible to construct kernel score examples in which the linear pool's entropy is strictly smaller than that of quantile-based combination (see Appendix \ref{sec:proofs}).

\section{Disagreement and Forecasting Performance}\label{sec:performance}

Sections \ref{sec:entropy} considers the linear pool's entropy function, which captures the pool's assessment of its own forecasting performance (ex ante, that is, before the outcome has realized). The following result instead consider the pool's ex post performance (that is, after observing the outcome $Y = y$). 
\begin{prop}
For every kernel scoring rule $S_L$, the linear pool's score satisfies
$$S_L(F_w, y) = \underbrace{\sum_{i=1}^n w_i S_L(F_i, y)}_{\text{average score of components}}-~~D,$$
where $y \in \Omega$ is the realizing outcome, and $D  = \sum_{i=1}^n \sum_{j > i} w_iw_j~d(F_i, F_j) \ge 0$ defined at (\ref{disagreement_pairs_lp}) denotes the average divergence between the pool's components $(F_i)_{i=1}^n$ and the pool $F_w$. \label{prop1}
\end{prop}

For the special case where $S_L$ is squared error, the statement of the proposition is well known, dating back at least to \cite{Engle1983}. See \citet[Equation 9]{KnueppelKrueger2022} for details and discussion. For the special cases where $S_L$ corresponds to the CRPS, the statement of Proposition \ref{prop1} has been noted as Corollary 3.3.1 by \cite{Krueger2013}. 
Furthermore, Proposition \ref{prop1} sharpens Proposition 1 of \cite{AllenEtAl2024} which states that $S_L(F_w, y) \le \sum_{i=1}^n S_L(F_i, y).$ Finally, \cite{NeymanRoughgarden2023} consider the difference $\sum_{i=1}^n w_i S(F_i, y) - S(F_c, y)$ where $F_c$ is some (not necessarily linear) combination of the $n$ forecast distributions $F_1,F_2, \ldots, F_n$, and $S$ is a proper scoring rule. They establish a specific form of $F_c$ (`quasi-arithmetic pooling') that optimizes the difference in a worst-case sense (see their Theorem 4.1).\footnote{Since they define scoring rules in positive orientation, their expression for the difference in question must be multiplied by minus one in our context.} By contrast, our Proposition \ref{prop1} provides the specific form of the difference for the case that $F_c = F_w$ is the linear pool, and the scoring rule $S = S_L$ is a kernel score. 

Proposition \ref{prop1} implies that the linear pool improves upon the average performance of its components. The amount of improvement is given by disagreement, $D$. Hence, for a given average performance of the pool's components, and for given combination weights, it is desirable that the components be as diverse as possible. 

According to Proposition \ref{prop1}, the gains from linear pooling are constant, in that $S(F_w, y) - \sum_{i=1}^n w_i S(F_i, y) = D,$ where $D$ does not depend on $y$. Furthermore, our Proposition \ref{prop:d} states that when using a kernel score $S_L$, $F_w$ coincides with the quasi-arithmetic pool. Taken together, the two results hence imply that the gains from quasi-arithmetic pooling are constant in $y$. The latter statement is derived by \citet[Theorem 4.1]{NeymanRoughgarden2023} in a formal setup that is slightly different from ours,\footnote{They consider general proper scoring rules for categorical outcomes, whereas we consider kernel scores for general outcomes on a finite outcome space.} and using different proof techniques. \citeauthor{NeymanRoughgarden2023} also provide an appealing economic motivation for considering the gains from combination, in terms of the profit of an agent who subcontracts a group of expert forecasters. 

\section{Implications} \label{sec:implications}

In this section we discuss some implications of the previous results. For this purpose it will be necessary to take a stance on the properties of the forecast distributions $F_i$. We will describe these properties by means of a prediction space, a joint distribution of forecasts and realizations. \cite{GneitingRanjan2013} provide a general prediction space framework, formalizing ideas that date back to \cite{Bates-Granger-69} and \cite{MurphyWinkler1987}. Our following definition is a slightly adapted version of \citeauthor{GneitingRanjan2013}'s Definition 2.1. 

\begin{definition}
Let $n \ge 1$ be an integer. A prediction space is a probability space $(\Omega, \mathcal{A}, \mathbb{Q})$ together with sub-$\sigma$-algebras $\mathcal{A}_1,\mathcal{A}_2,\ldots,\mathcal{A}_n$, where the elements of the sample space $\Omega$ can be identified with tuples $(F_1,\ldots,F_n,Y)$ such that
\begin{itemize}
\item for $i = 1, \ldots, n$, $F_i$ is a CDF-valued random quantity that is measurable with respect to $\mathcal{A}_i$, and 
\item $Y$ is a real-valued random variable. 
\end{itemize}
\end{definition}

Let $\mathcal{L}(Y|\mathcal{A})$ denote the distribution of $Y$ given some sub-$\sigma$-algebra $\mathcal{A}$, and let $\sigma(F_i)$ denote the $\sigma$-algebra generated by the forecast $F_i$ itself. We then consider the following notion of forecast calibration  \citep{Tsyplakov2013, GneitingRanjan2013}:
\begin{definition}
The forecast distribution $F$ is auto-calibrated if $\mathcal{L}(Y|F_i) = F_i$ holds $\mathbb{Q}$-almost surely.
\end{definition}
Intuitively, the condition implies that $F_i$ is free of any systematic biases. A user who learns about the forecast $F_i$ (but has no access to other relevant information about $Y$) should thus use the forecast `as is', i.e. there is no need to correct or post-process the forecast in any way. See \citet[Section 2.1]{KnueppelEtAl2022} for further discussion.

\subsection{Choice of Combination Weights} \label{sec:implications_weights}

We first consider implications of Proposition \ref{prop1} on the choice of combination weights, a problem that has been widely discussed in the literature. In the context of mean forecasting under squared error loss, the empirical result that equal weights are often hard to beat (the `forecast combination puzzle') has motivated theoretical interest in the subject, see e.g. \cite{ClaeskensEtAl2016} and \cite{Elliott2025}. 

We initially assume that the weights $(w_i)_{i=1}^n$ are constant, i.e., they do not depend on conditioning information.
Proposition \ref{prop1} implies that the population version of the weight minimization problem is given by
\begin{equation}
w^* = \underset{w \in \text{PS}^n}{\text{arg min}} \left( \sum_{i=1}^n w_i \mathbb{E}_\mathbb{Q}(S_i) - \underbrace{\sum_{i=1}^{n-1} \sum_{j>i} w_iw_j~\mathbb{E}_\mathbb{Q}(d_{ij})}_{=\mathbb{E}_\mathbb{Q}(D)}\right),\label{constweights}
\end{equation}
where we use the simplified notation $S_i = S_L(F_i, Y)$ and $d_{ij} = d(F_i, F_j)$. The following result presents a sufficient condition under which equal weights are optimal, i.e. $w^* = (1/n, \ldots, 1/n)'$ at (\ref{constweights}). 
\begin{prop}\label{prop:equidist}
Suppose that a kernel scoring rule $S_L$ is used, and that the combination weights are required to be constant. Consider the following assumptions.
\begin{itemize}
\item[A] All $n$ individual forecasts perform equal on expectation, i.e. $\mathbb{E}_\mathbb{Q}(S_i) = c$ for all $i = 1,\ldots,n$. 
\item[B] The expected divergence between any pair of two forecast distributions $F_i, F_j$ is identical, i.e. $\mathbb{E}_\mathbb{Q}(d_{ij}) = d$ for all $i,j \in \{1,\ldots,n\}$ with $i \neq j$.
\end{itemize}
Under Assumption A, the population optimal combination weights maximize $\mathbb{E}_\mathbb{Q}(D)$ defined at (\ref{constweights}). Under Assumptions A and B, equal combination weights are optimal in population.
\end{prop}

If the forecasts are exchangeable in the sense that the joint distribution of $(F_1, \ldots, F_n, Y)$ is invariant to permutations of $F_1, \ldots, F_n$, then Assumptions A and B are satisfied. However, Assumptions A and B are weaker than exchangeability, and depend on the scoring rule $S_L$ being used. In particular, for mean forecasting under squared error loss, Assumptions A and B depend only on the means $\mu_1, \ldots, \mu_n$ of the forecast distributions, so the distributions are allowed to differ regarding other facets such as variance or skewness. For other scoring rules such as CRPS, however, such differences would typically violate Assumptions A and B.

Assumption A is strong in principle but can well be plausible in practice. In particular, there is often uncertainty about the relative forecasting performance of various methods. This can be due to a short history of forecasts and observations (e.g. when forecasting European macroeconomic data, which are available only quarterly since 2002), due to entry and exit of forecasters in surveys or collaborative forecasting efforts \citep[see e.g.][in epidemiology]{CramerEtAl2022}, or due to temporal instability of the methods' relative performance \citep{GiacominiRossi2010}. In the presence of such uncertainty, the equal-performance hypothesis from Assumption A is arguably a plausible default. Assumption B imposes another symmetry-type condition on the methods. The plausibility of this condition hinges on the set of methods being combined. When $n = 2$, the condition is trivially satisfied. Remark \ref{prop:link} links Assumptions A and B to a well-known set of assumptions that ensures optimality of equal weights in the case of mean forecasts \citep[Section 2.4]{Timmermann2006}. 

\begin{rem}\label{prop:link}
Consider mean forecasting under the squared error scoring rule, i.e. $S_L(F,y) = (y-\mu)^2,$ where $\mu = \int z~dF(z)$. Assume further that the $n$ mean forecasts to be combined are all unconditionally unbiased, such that $\mathbb{E}_\mathbb{Q}(Y) = \mathbb{E}_\mathbb{Q}(\mu_i), i = 1, \ldots,n$. Then Assumptions A and B of Proposition \ref{prop:equidist} are equivalent to assuming that the forecast errors $e_i = Y-\mu_i$ have identical variances and pairwise identical correlations.
\end{rem}

\cite{Elliott2025} and \cite{DieboldEtAl2025} argue that the assumptions of Remark \ref{prop:link} are empirically plausible for survey forecasts of US macroeconomic variables.

We next consider weights that are not constant but reflect conditioning information used in the forecasting process. For example, the information set $\mathcal{A}_F = \bigcup_{i=1}^n \sigma(F_i)$ is generated by knowledge of the forecasts $F_1, \ldots, F_n$. As another example, let $\mathcal{A}_i$ denote the information set underlying forecast $i$. Then $\mathcal{A}_{1:n} = \bigcup_{i=1}^n \sigma(\mathcal{A}_i)$ is the union of the information sets underying the individual forecasts. Clearly, $\mathcal{A}_F \subseteq \mathcal{A}_{1:n}$. 

Let $w_\mathcal{A}$ denote the combination weights that optimize expected performance conditional on $\mathcal{A}$, i.e. 
\begin{equation}
w_\mathcal{A}^* = \underset{w \in \text{PS}^n}{\text{arg min}} \left( \sum_{i=1}^n w_i \mathbb{E}_\mathbb{Q}(S_i|\mathcal{A}) - \sum_{i=1}^{n-1} \sum_{j>i} w_iw_j\mathbb{E}_\mathbb{Q}(d_{ij}|\mathcal{A})\right).\label{dynweights}
\end{equation} Constant weights as considered at (\ref{constweights}) arise as a special case when $\mathcal{A} = \emptyset$ is the empty information set.

\begin{prop} Let $w_\mathcal{A}^*$ and $w_\mathcal{B}^*$ denote conditionally optimal weights as described above. If $\mathcal{B} \subseteq \mathcal{A},$ then $w_\mathcal{A}*$ attains a better unconditionally expected score than $w_\mathcal{B}*$, i.e. 
$$\mathbb{E}_\mathbb{Q}\left(\sum_{i=1}^n w_{\mathcal{A}, i}^* S_i - \sum_{i=1}^{n-1} \sum_{j>i} w_{\mathcal{A}, i}^*w_{\mathcal{A}, j}^*d_{ij}\right) \le \mathbb{E}_\mathbb{Q}\left(\sum_{i=1}^n w_{\mathcal{B}, i}^* S_i - \sum_{i=1}^{n-1} \sum_{j>i}^n w_{\mathcal{B}, i}^*w_{\mathcal{B}, j}^*d_{ij}\right).$$ \label{prop:dynamic}
\end{prop}
The proposition states that using a larger information set is necessarily beneficial when constructing forecast combination weights. Importantly, this population-level result assumes that $w_\mathcal{A}^*$ and $w_\mathcal{B}^*$ make optimal use of their respective information sets. The proposition is similar in spirit to results on nested information sets in forecasting (see e.g. \citealt[Corollary 2]{HolzmannEulert2014} or \citealt[Proposition 3.2]{KruegerZiegel2021}). 

Apart from the empty information set, the information set $\mathcal{A}_{F}$ seems particularly relevant from a practical perspective: Since the individual forecast distributions must be known for constructing any combination, they might as well be used for constructing the combination weights. Given that knowledge of 
the forecasts implies knowledge of pairwise divergences (i.e., $d_{ij} \in \mathcal{A}_{F}$ for all $i, j$), the 
optimal weights based on this information set are
\begin{equation}
w_{\mathcal{A}_{F}}^* = \underset{w \in \text{PS}^n}{\text{arg min}} \left( \sum_{i=1}^n w_i \mathbb{E}_\mathbb{Q}(S_i|\mathcal{A}_{F}) - \sum_{i=1}^{n-1} \sum_{j>i} w_iw_j d_{ij}\right).\label{dynweights2}
\end{equation}
The expression $\mathbb{E}_\mathbb{Q}(S_i|\mathcal{A}_{F})$ is a rather complicated object, as it implicitly depends on the distribution of $(F_1, F_2, \ldots, F_n, Y)$. Thus, any attempt to use $w_{\mathcal{A}_F}^*$ in practice requires to either estimate the term (e.g. via regression methods) or to restrict it by means of assumptions.

\begin{rem}
Estimating $\mathbb{E}_\mathbb{Q}(S_i|\mathcal{A}_{F})$ empirically requires to choose regressors that are measurable with respect to $\mathcal{A}_{F}$. In practice, a plausible strategy is to choose a pool of candidates, combined with a prediction method that carefully exploits this pool while avoiding overfitting. Below we pursue this principle, using the pairwise divergences between all forecasts, $(d_{ij})_{j>i}$, as well as the individual forecasts' entropies, $(\text{ent}(F_i))_{i=1}^n$, as regressors. This pool of regressors can be motivated as follows: Based on the definition of divergence and entropy, the conditionally expected score in question can be written as $$\mathbb{E}_\mathbb{Q}(S_i|\mathcal{A}_{F}) = d(F_i, \mathcal{L}(Y|\mathcal{A}_F)) + \text{ent}(\mathcal{L}(Y|A_F)),$$ where $\mathcal{L}(Y|A_F)$ is the true distribution of $Y$ given $\mathcal{A}_F$. The latter distribution is of course unknown in practice. However, if $\mathcal{L}(Y|A_F) = \sum_{i=1}^n w_i F_i$ for some choice of combination weights $w_1,\ldots, w_n$, calculations similar to the proof of Proposition \ref{prop0} imply that $\mathbb{E}_\mathbb{Q}(S_i|\mathcal{A}_{F})$ is a function of the pool of regressors mentioned above. To describe the approach in practice, let $S_{i,t}$ denote the score of method $i$ at date $t$. We seek to model the conditional mean of $S_{i,t}$ as a function of the $0.5 \cdot n \cdot (n-1)$ regressors given by $(d_{ij,t})_{j > i},$ where $d_{ij,t}$ is the divergence between forecasts $i$ and $j$ at date $t$, and the entropy terms for the $n$ distributions at date $t$, $(\text{ent}(F_{i,t}))_{i=1}^n$. For this purpose, we use complete subset regressions (CSRs) due to \cite{ElliottEtAl2013}. We use one regressor at a time ($k = 1$ in their notation), so that the predicted value is a simple average of $0.5 \cdot n \cdot (n+1)$ linear regressions, each of which uses an intercept term in addition to the regressor in question. \label{remark:csr}
\end{rem}

\begin{rem}
Under additional assumptions, Equation (\ref{dynweights2}) can be used to motivate a trimming-type combination of probabilistic forecasts considered by \cite{Taylor2025b}. Trimming is based on the notion that a forecast is more promising if it is more central, where centrality can be measured by means of statistical depth functions \citep{Mosler2022}. Let $\text{depth}_i$ denote the depth of forecast $i$, with larger values indicating that the forecast is more central. Since the depth of each forecast can be determined on the basis of $F_1, F_2, \ldots, F_n$, it holds that $\text{depth}_i \in \mathcal{A}_F$ for all $i$.
\cite{Taylor2025b} considers using a weight of one for the deepest forecast, a method he labels `Deepest'. This method can be motivated from Equation (\ref{dynweights2}), together with a prior belief that the deepest forecast outperforms the other forecasts by a sufficient amount. For example, assume that $\mathbb{E}_\mathbb{Q}(S_i|\mathcal{A}_{F}) = \theta_0 - \theta_1 \text{depth}_i,$ where $\theta_0$ and $\theta_1 > 0$ are parameters that reflect the prior beliefs of the person choosing the combination weights. If $\theta_1$ is sufficiently large, then $w_{\mathcal{A}_F}$ places all weight on the forecast $i^* \in \{1, 2, \ldots, n\}$ that maximizes $\text{depth}_i$.
\end{rem}

\subsection{(Lack of) Calibration of the Linear Pool}

\begin{definition}
For a proper scoring rule $S$, let $$\Delta_S= \mathbb{E}_\mathbb{Q}[S(F_i, Y)] - \mathbb{E}_{\mathbb{Q}} [\text{ent}(F_i)].$$ The forecast distribution $F_i$ is called realistic if $\Delta_S = 0$. $F_i$ is called underconfident if $\Delta_S < 0$, and overconfident if $\Delta_S > 0$. \label{ass:realism}
\end{definition}

The term $\Delta_S$ from Definition \ref{ass:realism} considers the difference between a forecast's actual performance and its entropy (i.e., the estimate of its own performance). This difference has been studied by \cite{KnueppelEtAl2022} and others. Definition \ref{ass:realism} requires the forecast $F_i$ to be realistic in an `average' sense that is weaker than auto-calibration \citep[Proposition 2]{Krueger2024}.

\begin{prop}
Suppose that a kernel scoring rule is used, and that $n$ forecast distributions are all realistic. Furthermore, consider a linear pool with fixed weights $w \in \text{PS}^n$. Then the linear pool is underconfident, and the degree of underconfidence is given by $2~\mathbb{E}_Q(D) \ge 0$. \label{prop:under}
\end{prop}

Proposition \ref{prop:under} generalizes \citet[Proposition 1]{KnueppelKrueger2022} from squared error and a univariate predictand to all kernel scores and a possibly multivariate predictand. Intuitively, its result is due to a dual role of disagreement $D$: On the one hand, the pool's entropy increases as $D$ increases (Proposition \ref{prop0}), i.e. the forecast distribution becomes less confident. On the other hand, ceteris paribus $D$ improves the pool's realized performance (Proposition \ref{prop1}). The proposition's broader implication that the combination of realistic components is underconfident has been noted by \cite{Hora2004} for a continuous univariate predictand, by \cite{RanjanGneiting2010} for a binary univariate predictand, and by \cite{GneitingRanjan2013} for a general univariate predictand, using various notions of calibration. We are not aware of results for a general multivariate predictand as covered by Proposition \ref{prop:under}.

Proposition \ref{lpvsqp} presents conditions under which quantile pooling is more effective than linear pooling, in the sense that it attains smaller entropy (i.e., is more confident) without introducing overconfidence. 

\begin{prop}
Consider the case of a univariate predictand with support $\Omega \subseteq \mathbb{R}$, and assume that 
\begin{itemize}
\item[A] $F_1, F_2, \ldots, F_n$ are realistic.
\item[B] $F_1, F_2, \ldots, F_n$ are strictly increasing on some interval $\Omega_F \subseteq \mathbb{R}$, $\mathbb{Q}$-almost surely.
\item[C] There is an interval $(\alpha_1, \alpha_2) \subseteq [0,1]$ such that $$\mathbb{P}_\mathbb{Q}\left(\text{min} \{F_i^{-1}(\alpha)\}_{i: w_i > 0} < Y < \text{max} \{F_i^{-1}(\alpha)\}_{i: w_i > 0}\right) > 0$$ for each $\alpha \in (\alpha_1, \alpha_2).$
\end{itemize}
Suppose that the CRPS is used, and consider the linear pool $F_w$ and quantile combination $F_w^q$ using the same fixed weight vector $w \in \text{PS}^n$. Then the quantile combination is underconfident and satisfies $\mathbb{E}_{\mathbb{Q}}[\text{ent}(F_w^q)] \le \mathbb{E}_{\mathbb{Q}}[\text{ent}(F_w)]$. \label{lpvsqp}
\end{prop}

Assumption B simplifies the setup by ensuring that quantiles are unique. Assumption C ensures that there is a strictly positive probability of `bracketing', i.e. the event that the realization falls between two predicted quantiles \citep[c.f.][]{GrushkaEtAl2017}. This assumption is mild and merely rules out very peculiar setups (such as the case of $n$ identical quantile curves, or situations where the quantile curves are guaranteed to collectively under- or overpredict $Y$ at all quantile levels). 

Proposition \ref{lpvsqp} implies that quantile-based combination has better calibration properties than linear pooling. That said, the following result states that the calibration properties of quantile-based combination are not satisfactory either.

\begin{prop}
Consider $n$ auto-calibrated forecast distributions, and suppose that Assumptions B and C of Proposition \ref{lpvsqp} hold. Then quantile-based combination fails to produce an auto-calibrated forecast distribution. \label{prop:qpac}
\end{prop}

This negative result on quantile-based combination mirrors the existing results for linear pooling discussed above. The only related `calibration failure' result for quantile combination we are aware of is \citet[Theorem 5]{Tibshirani2023}, which shows that marginal calibration \citep[see][Section 2.2]{GneitingResin2023} is not preserved under quantile combination. See Section \ref{sec:lcalib} for a simple illustration of Proposition \ref{prop:qpac}.

\section{Gaussian Example}

\label{sec:simulation}

Here we present a Gaussian example that illustrates the results on weighting schemes and underonfidence discussed in Section \ref{sec:implications}. 

\subsection{Setup}

Let $Y = X_1 + X_2 + X_3 + \varepsilon,$ with 
$$X = \begin{pmatrix}
X_1, & X_2, & X_3 
\end{pmatrix}' \sim \mathcal{N}\left(\mathbf{0}, \begin{pmatrix}
1 & 0 & \rho \sigma_3 \\ 0 & 1 & \rho \sigma_3 \\ \rho \sigma_3 & \rho \sigma_3 & \sigma_3^2 \end{pmatrix}\right),$$ where $\sigma_3 > 0$ and $\rho \in (-\frac{1}{\sqrt{2}}, \frac{1}{\sqrt{2}})$ are parameters and $Z$ is a Bernoulli random variable with success probability $1/2$ that is independent of $X$. The restriction on $\rho$ ensures that the covariance matrix of $X$ is positive definite. Conditional on $Z = z$, $\varepsilon \sim \mathcal{N}(0, 1 + z).$ 

We consider a setting with three forecasters, where forecaster $i \in \{1, 2, 3\}$ issues the correct conditional distribution of $Y$ given $Z = z$ and $X_i = x_i$. Since $\begin{pmatrix}
X_i, &Y \end{pmatrix}$ is bivariate normal given $Z = z$, the conditional distributions of $Y|Z,X_i$ is also normal. For $i = 1,2$, we have
\begin{eqnarray*}
\mathbb{E}_{\mathbb{Q}}(Y|Z=z,X_i = x_i) &=& x_i \cdot (1 + \rho \sigma_3) = M_i,\\
\mathbb{V}_{\mathbb{Q}}(Y|Z=z,X_i = x_i) &=& 2 + z + \sigma_3^2(1-\rho^2) + 2 \rho \sigma_3 = V_i.
\end{eqnarray*}
Furthermore, 
\begin{eqnarray*}
\mathbb{E}_{\mathbb{Q}}(Y|Z=z, X_3 = x_3) &=& x_3 \cdot (1 + 2\rho/\sigma_3) = M_3\\
\mathbb{V}_{\mathbb{Q}}(Y|Z=z, X_3 = x_3) &=& 3 + z - 4 \rho^2 = V_3
\end{eqnarray*}

The setting implies that Forecasts 1 and 2 are exchangeable. If $\sigma_3$ and $|\rho|$ are sufficiently large, then $X_3$ is a more informative forecast than $X_1$ or $X_2$, performing better in terms of expected squared error or CRPS: If $\sigma_3$ is large, then $X_3$ is an important component of $Y$, so having access to it is particularly helpful. If $|\rho|$ is large, then $X_3$ is helpful for predicting $X_1$, $X_2$ and thus $Y$. The setting further implies that all individual forecast distributions are realistic (see Definition \ref{ass:realism}). Hence by Proposition \ref{prop:under}, any non-trivial linear pool will be underconfident.

\subsection{Weighting Schemes}

We consider the following weighting schemes. First, equal weights of $1/3$ for each forecast. Second, static weights that are the finite sample analogue of  $w^*$ defined at (\ref{constweights}). Third, weights $w_{\mathcal{A}_F}^*$ as defined at (\ref{dynweights2}). As discussed there, the weights are infeasible in practice, which is why we refer to them as `dynamic oracle' weights. The weights can be computed analytically in the current setup, and are informative as a hypothetical performance benchmark. We next derive an expression for $\mathbb{E}_{\mathbb{Q}}(S_i|\mathcal{A}_F)$ which is required to compute the weights. Let $\mu_\mathbf{X}$ and $\sigma^2_\mathbf{X}$ denote the mean and variance of $Y|X_1,X_2,X_3,Z$, and note that $\mathbb{E}_\mathbb{Q}(Y|\mathcal{A}_F) = \mu_\mathbf{X} = X_1 + X_2 + X_3$ and $\mathbb{V}_\mathbb{Q}(Y|\mathcal{A}_F) = \sigma^2_\mathbf{X} = 1 + Z$. Then $$\mathbb{E}_{\mathbb{Q}}(S_i|\mathcal{A}_F) = d_\mathcal{N}(\mu_\mathbf{X},\sigma^2_\mathbf{X}, M_i, V_i) + \text{ent}_\mathcal{N}(\mu_\mathbf{X},\sigma^2_\mathbf{X}),$$ 
where $d_\mathcal{N}$ and $\text{ent}_\mathcal{N}$ are the CRPS divergence and entropy functions in the Gaussian case (see Appendix \ref{crps_gaussian}). Fourth, a feasible counterpart of $w_{\mathcal{A}_F}$ using complete subset regressions, as decribed in Remark \ref{remark:csr}. We refer to this method as `dynamic'.

For comparison to the various linear pooling methods, we also consider an equally weighted quantile combination. As an infeasible theoretical benchmark, we further consider the true forecast distribution $\mathcal{L}(Y|\mathcal{A}_F)$ which combines the information sets underlying the individual forecasts (i.e., $X_1, X_2$ and $X_3$) instead of the forecasts themselves. In all three scenarios, this approach yields a Gaussian distribution with mean $X_1 + X_2 + X_3$ and variance $1+Z$. Since $Z$ is a Bernoulli random variable with success probability one half, the expected CRPS of $\mathcal{L}(Y|\mathcal{A}_F)$ is $0.5 \times (1 + \sqrt{2})/\sqrt{\pi} \approx 0.68$ (see Appendix \ref{crps_gaussian}).\footnote{While also infeasible, the `dynamic oracle' method is forced to use a linear pool of the individual forecasts. This form is misspecified since the true distribution $\mathcal{L}(Y|\mathcal{A}_F)$ is Gaussian, so that the `dynamic oracle' performs worse than $\mathcal{L}(Y|\mathcal{A}_F)$ .}

\subsection{Simulation Results}

Here we present simulation results for $1\,000$ Monte Carlo iterations. In each iteration, we use a training sample of $n_{\text{train}}$ observations (forecasts and outcomes) for estimating the weights, followed by a test sample of $100$ observations. We thus obtain a total of $100\,000$ test sample observations. For each weighting scheme, we compute the average CRPS across these observations. We set $n_{\text{train}}$ to either $16$ or $125$, corresponding to the values used in our empirical case studies below. We consider three scenarios:

\subsubsection*{Scenario 1: Exchangeable Forecasts}
We first set $\rho = 0$ and $\sigma_3 = 1$, so that all three forecasts are exchangeable, and the conditions of Proposition \ref{prop:equidist} are satisfied. Hence equal weights are optimal.

\subsubsection*{Scenario 2: Moderate Differences in Forecast Performance}

Here we set $\rho = 0.2$ and $\sigma_3 = 1.2,$ so that the expected CRPS of forecast $3$ is about $12\%$ smaller (i.e., better) than the expected CRPS of forecasts 1 and 2.

\subsubsection*{Scenario 3: Pronounced Differences in Forecast Performance}

Here we set $\rho = 0.25$ and $\sigma_3 = 1.8$, so that forecast $3$ is about $30\%$ more accurate than the other forecasts in terms of expected CRPS.

\begin{table}
\centering
\begin{tabular}{lrrrrrr}
\toprule
&  \multicolumn{2}{c}{Scenario 1} & \multicolumn{2}{c}{Scenario 2} & \multicolumn{2}{c}{Scenario 3}\\
& $n = 16$ & $n = 125$ & $n = 16$ & $n = 125$ & $n = 16$ & $n = 125$ \\
\midrule
forecast3 & 1.05 & 1.05 & 1.03 & 1.03 & 1.01 & 1.01\\ \addlinespace
static & 0.99 & 0.96 & 1.01 & 0.97 & 1.01 & 0.98\\
equal & 0.96 & 0.96 & 0.99 & 0.98 & 1.09 & 1.09\\
dynamic oracle & 0.81 & 0.80 & 0.81 & 0.81 & 0.84 & 0.83\\
dynamic & 0.98 & 0.97 & 1.00 & 0.98 & 1.03 & 1.01\\
\addlinespace
equal quantile comb. & 0.95 & 0.95 & 0.97 & 0.97 & 1.08 & 1.08\\
\bottomrule
\end{tabular}
\caption{Simulation results for the design described in Section \ref{sec:simulation}. Forecast 3 is the best individual forecast in all settings. The other rows represent combination methods as described in the text. The expected CRPS of the true forecast distribution is $0.68$ in all cases. \label{tab:sim}}
\end{table}

\subsubsection*{Results}

Table \ref{tab:sim} presents the simulation results. While the equal and `dynamic oracle' methods use fixed and known parameters, the `static' and `dynamic' methods feature finite sample estimation uncertainty. Comparisons of their performance across sample sizes ($n = 16$ versus $n = 125$) reveal the expected pattern that a larger sample size is beneficial. That said, this effect is quantitatively rather small, relative to variation of CRPS scores across methods (rows). Similarly, the performance gap between static and equal weights is small in Scenario 1 where equal weights are optimal.

Static and dynamic weights generally perform similarly, and dynamic oracle weights clearly outperform both of them. This indicates a potential for further benefits of the dynamic approach when using the regressor pool defined by $\mathcal{A}_F$ more effectively. Empirical evidence seems necessary to clarify how much of these potential gains can realistically be realized in practice.

The combinations mostly outperform forecast 3, which is the best individual forecast in all settings. This is particularly interesting in the case of static and equal weights, which generally yield underconfident forecasts (Proposition \ref{prop:under}), whereas forecast 3 is realistic. In Scenario 1, equal weights are even maximally underconfident (as they maximize $\mathbb{E}_\mathbb{Q}(D)$) but are nevertheless optimal in terms of expected CRPS. 

We further find small but consistent gains of equally weighted quantile combination as compared to equally weighted linear combination. These gains can be explained by the fact that quantile combination yields a sharper forecast distribution, which is desirable in the present case (see Proposition \ref{lpvsqp}). 

\subsection{Lack of Calibration} \label{sec:lcalib}

Propositions \ref{prop:under} and \ref{prop:qpac} imply that both linear and quantile-based combination fail to be auto-calibrated in the present setup. Here we provide a simple illustration of this result, focusing on Scenario 1 (exchangeable forecasts) and equally weighted combination for brevity. In this setup, the linear pool is an equally weighted mixture of three distributions with respective mean $X_i$ and variance $3+Z$. Hence the variance of the combined distribution is $3 + Z + 1/9~\sum_{i=1}^2\sum_{j > i} (X_i-X_j)^2,$ with an expected value of $25/6 \approx 4.16$. The quantile-based combination is a Gaussian distribution with a variance of $3 + Z$, i.e. an expected variance of $3.5$. The expected squared error of $\bar{X} = \frac{1}{3} \sum_{i=1}^n X_i$, which is the mean forecast of both combination methods, is $17/6 \approx 2.83$, i.e. less than the expected variance of either combination method. Hence neither combination yields an auto-calibrated forecast distribution, for which the expected entropy (e.g. variance) would coincide with the expected score (e.g. squared error).

\section{Empirical Case Studies}\label{sec:empirical}

This section provides empirical illustrations on inflation and exchange rate forecasting. 

\subsection{Bivariate Forecasting of Two Inflation Measures}\label{sec:illustration}

We first consider bivariate forecasting of two popular US inflation measures, based on the consumer price index (CPI) and the price index of GDP (PGDP). For each measure, we consider quarterly annualized growth rates of the underlying index. Figure \ref{fig:inflation_actuals} in the appendix plots the two time series. The series are highly correlated, so that either series is of potential help for predicting the other. We construct bivariate forecast distributions using two methods. First, we use average point forecasts from the \citeauthor{PhiladelphiaFed2024a}'s Survey of Professional Forecasters (SPF). We use the bivariate empirical distribution of the SPF's historical forecast errors to construct a forecast distribution. This is a simple bivariate `postprocessing' method based on the principle of using past forecast errors in order to estimate future forecast uncertainty; see \cite{SchefzikEtAl2013} for further discussion. As a second forecasting method, we use a Bayesian vector autoregressive (BVAR) model with stochastic volatility. Specifically, we use the model proposed by \cite{Primiceri2005} and \cite{delNegro2015}, as implemented in the R package \textsf{bvarsv} \citep{Krueger2015}. We employ the default setting of the latter implementation -- in particular, using a single autoregressive lag, and priors that allow for time variation in both the mean and variance equations of the BVAR. For constructing the forecast distributions, we employ real-time data on CPI and PGDP, as provided by \cite{PhiladelphiaFed2024b}. We use an expanding estimation window, with data ranging back until 1980. We consider both current-quarter forecasts and one-year-ahead forecasts. Since the current quarter's observation is not yet available to SPF participants, these two horizons correspond to $h = 1$ and $h = 5$ quarter ahead forecasts. We consider forecasts made between 1994:Q4 (the earliest quarter for which the Philadelphia Fed's real-time data is available) and 2024:Q2.

We use the Energy Score for evaluating the bivariate probabilistic forecasts; see Appendix \ref{sec:specific} for details. At any given date and forecast horizon, the two component distributions $F_1$ (SPF) and $F_2$ (BVAR) are equally weighted empirical distributions of $40$ and $5\,000$ observations respectively. For linear pooling, we consider equal, static or dynamic combination weights, where the latter are constructed as described in Remark \ref{remark:csr}. We use a rolling window of $16$ observations for estimating the static and dynamic weights.\footnote{The label `static' refers to the fact that these weights are a finite sample analogue of the unconditional weights at (\ref{constweights}). In practice, the use of a rolling estimation window introduces time variation into these weights.}

Figure \ref{fig:es_spf_bvar_h1} in the appendix plots the equally weighted pool's expected ES (i.e., its entropy) and its components for current-quarter forecasts. In most quarters, disagreement accounts for a modest share of the expected ES, with an average share of $7\%$. Two notable exceptions with large disagreement arise in 2008:Q4 and 2020:Q2. These two quarters are associated with the global financial crisis and the Covid-19 pandemic respectively. In these quarters, disagreement peaks both in absolute terms and regarding its share among the linear pool's expected score ($46.6\%$ in 2008:Q4, and $52.4\%$ in 2020:Q2).
The top row of Figure \ref{fig:examples} shows the bivariate forecast distributions for these quarters. In both instances, the SPF distribution is located to the southwest of the BVAR distribution, indicating lower inflation rates according to both measures (CPI and PGDP). The SPF's assessment is in line with the eventual realizations, and can be explained by the survey's access to more timely intra-quarter information that is not available to the BVAR. In particular, the SPF point forecasts correctly anticipate the dis-inflationary short-term impact of the economic shocks of 2008:Q4 and 2020:Q2. Thus, the SPF's access to recent and possibly judgmental information is beneficial in these examples of short-term forecasts in turbulent periods. 

Interestingly, the effect just described is not present for one-year-ahead forecasts ($h = 5$). Information about the current state of the economy seems to matter less at this longer horizon, where it is dominated by shocks occurring between the forecast date and the target date \citep[c.f.][Section 4.4]{KruegerEtAl2017}. To illustrate this point, the bottom row of Figure \ref{fig:examples} shows the one-year-ahead forecast distributions in 2008:Q4 and 2020:Q2. Disagreement between the SPF and BVAR distributions is small even in these turbulent periods. More broadly, at $h = 5$ the share of disagreement among the linear pool's entropy is $3\%$ on average, with a maximal share of $16\%$ attained in 2001:Q2. Figure \ref{fig:es_spf_bvar_h5} in the appendix shows details for $h = 5$.

Table \ref{tab:scores_spf_bvar} summarizes the forecast performance of the SPF and BVAR forecasts as well as the three linear pools. Among the two individual forecasts, the SPF attains a better score at $h = 1$ while the BVAR prevails at $h = 5$.
Among the three pools, equal weights perform worst at $h = 1$ but best at $h = 5$. Static and dynamic weights generally perform similarly. 

In Table \ref{tab:dm}, we consider \cite{DieboldMariano1995} tests of equal predictive ability between any pair of pooling methods. To estimate the variance in the relevant $t$ statistic, we use the function \verb|NeweyWest| in the \verb|R| package \textsf{sandwich} \citep{Zeileis2004,ZeileisEtAl2020} which allows for autocorrelation. Using a two-sided test at the 5\% level, pairwise differences in performance are statistically insignificant. The test results under squared error loss, which we report in Table \ref{tab:dm_se}, are qualitatively identical.

As discussed in Section \ref{sec:implications}, comparing a method's expected and actual scores provides evidence on calibration. To test the null that a method is realistic about its own performance (see Definition \ref{ass:realism}), we again use a $t$ test.  \cite{KnueppelEtAl2022} refer to this method as an `entropy test'. Using the same variance estimator as for Diebold-Mariano testing, we do not reject the null in any of the six cases considered (three methods times two horizons), yielding no evidence against calibration. See Table \ref{tab:calibration} for details. 

\begin{table}
\centering
\begin{tabular}{lrr}
\toprule
 & $h =  1$ & $h =  5$\\
\midrule
SPF & 1.10 & 1.83\\
BVAR & 1.47 & 1.67\\
LP (equal weights) & 1.19 & 1.70\\
LP (static weights) & 1.10 & 1.76\\
LP (dynamic weights) & 1.11 & 1.74\\
\bottomrule
\end{tabular}

	\caption{Energy Score for bivariate forecast distributions of CPI and PGDP inflation. The evaluation sample covers forecasts made from 1998:Q4 onwards. The latest observation refers to forecasts made in 2024:Q2 (for $h = 1$, resulting in 103 observations) or 2023:Q2 (for $h = 5$, resulting in 95 observations). Realizations are computed based on second-vintage data.  \label{tab:scores_spf_bvar}}
\end{table}

\begin{figure}
	\begin{tabular}{cc}
		2008:Q4, $h = 1$ & 2020:Q2, $h = 1$ \\[-.7cm]
		\includegraphics[width=.5\textwidth]{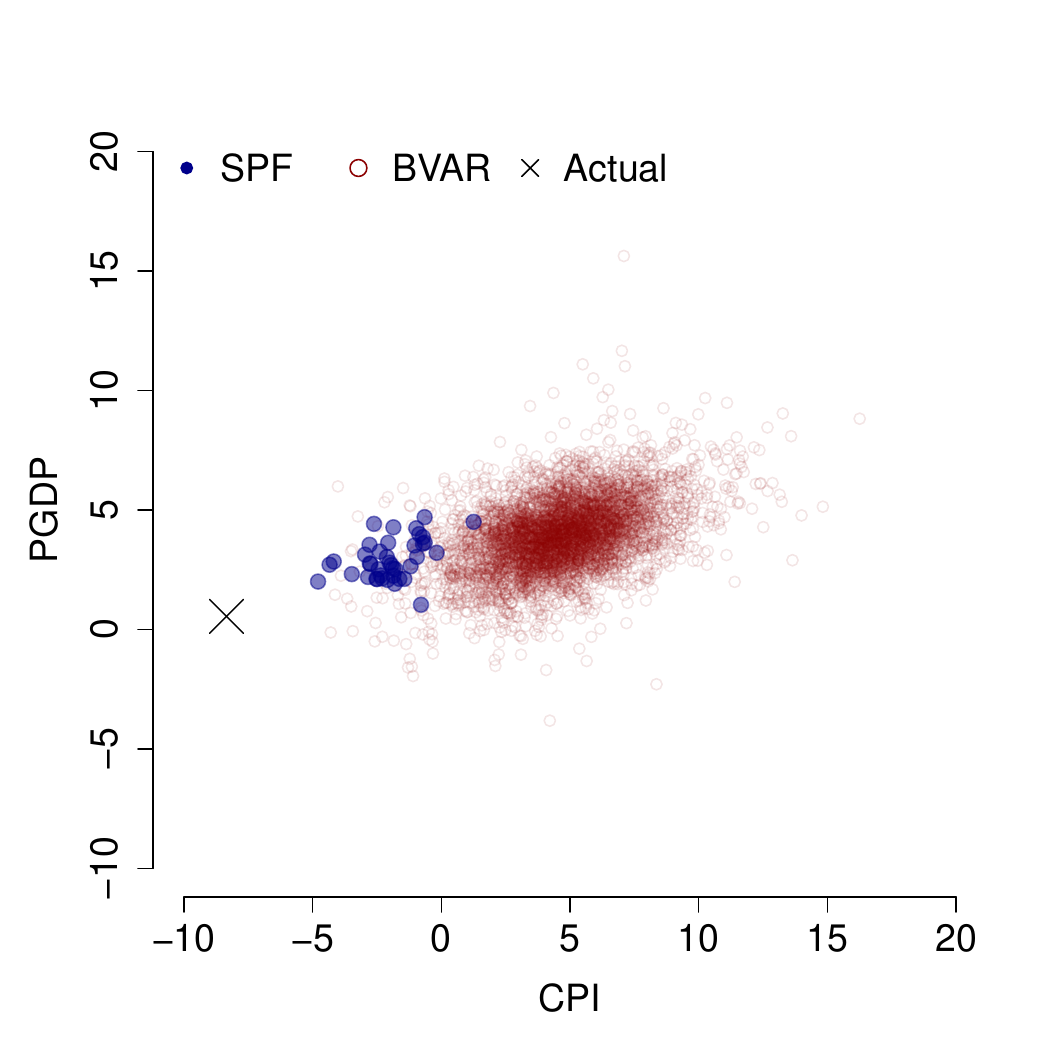}	&
		\includegraphics[width=.5\textwidth]{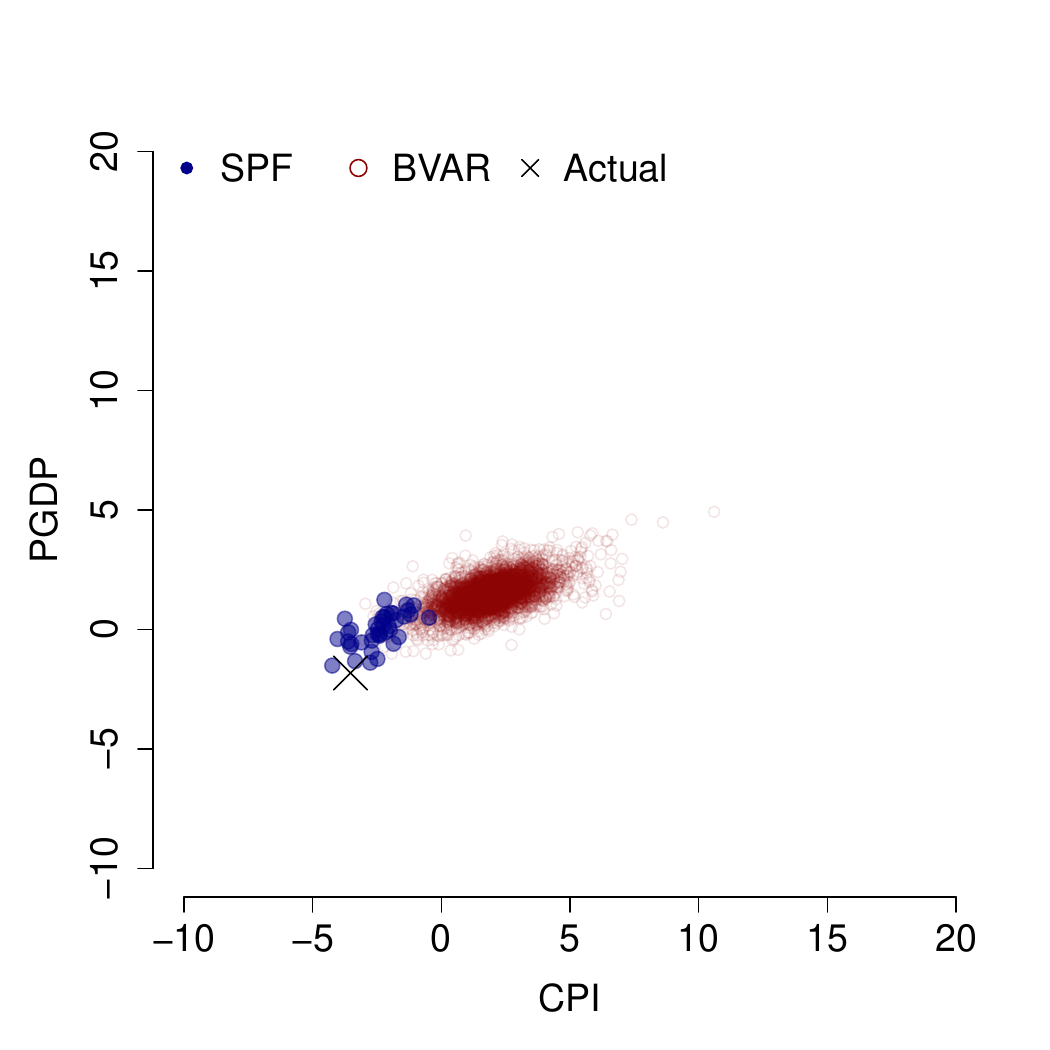}\\[.5cm]
		2008:Q4, $h = 5$ & 2020:Q2, $h = 5$ \\[-.7cm]
		\includegraphics[width=.5\textwidth]{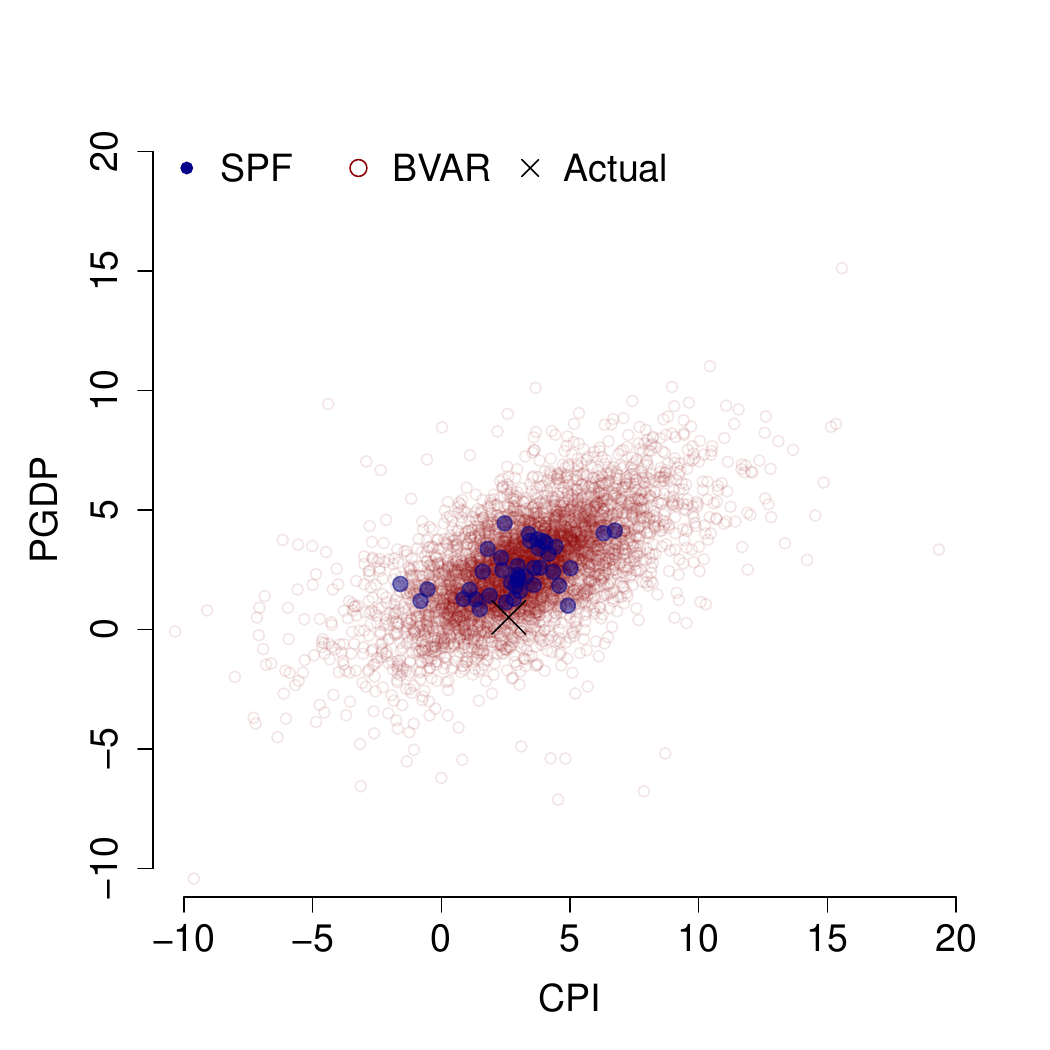}	&
		\includegraphics[width=.5\textwidth]{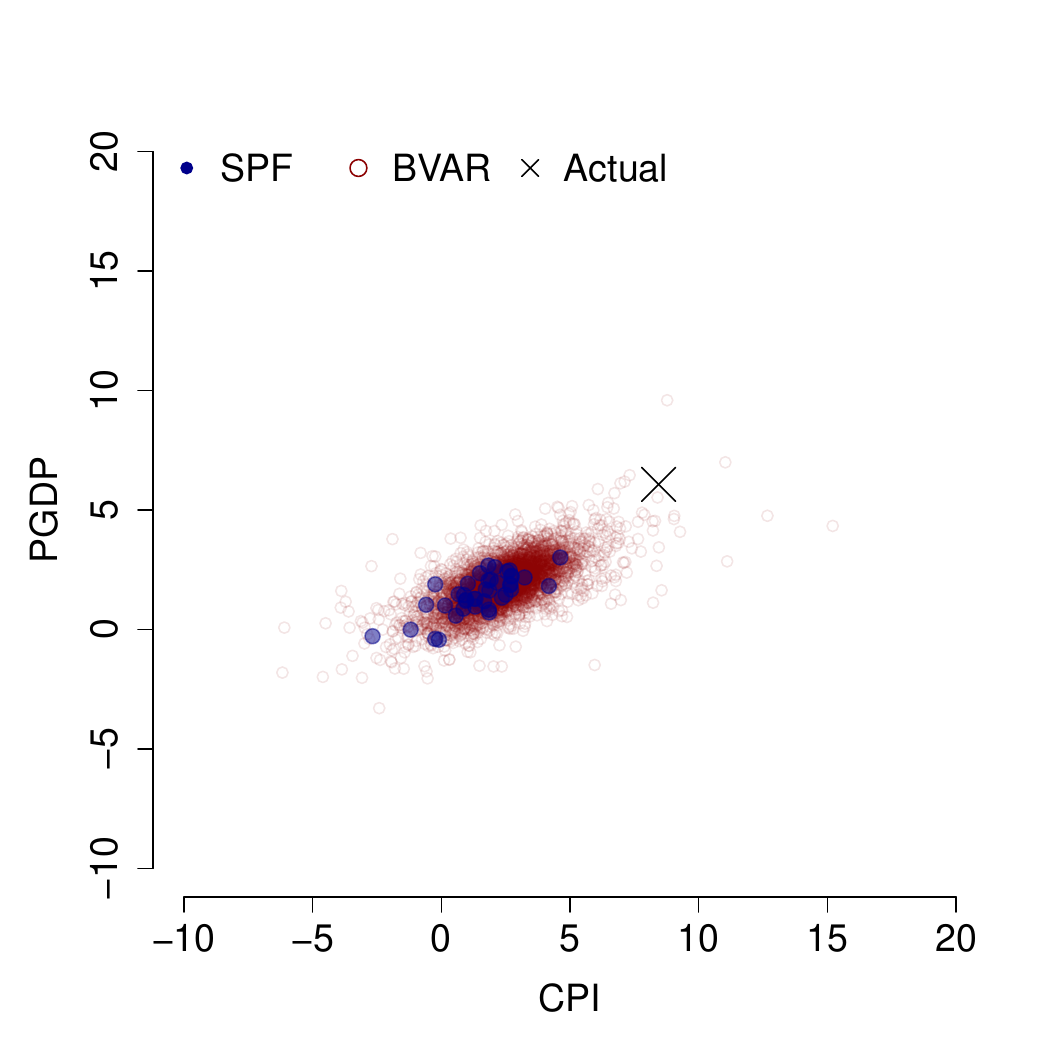}\\
	\end{tabular}
	\caption{Bivariate forecast distributions in 2008:Q1 and 2020:Q2. For $h = 1$, these are the two quarters in which disagreement is maximal (see Figure \ref{fig:es_spf_bvar_h1}). \label{fig:examples}}
\end{figure}

\subsection{Exchange Rate Forecasting}

Our second case study considers distributional forecasting of exchange rates. Making point forecasts of exchange rates is notoriously difficult in the sense that simple benchmarks are hard to beat (see \citealt{Rossi2013} and the references therein). Predicting other facets of the conditional distribution of exchange rates, such as their conditional variance, seems more promising and is also practically relevant for risk management purposes. A large number of studies, including \cite{HansenLunde2005}, consider predicting exchange rate volatility, often via generalized autoregressive heteroscedasticity (GARCH) type models. A smaller number of papers including \cite{AbbateMarcellino2018} and \cite{Huber2019} consider forecasting the full distribution of exchange rates. 

In order to forecast the exchange rate at date $t$, we simply use the $M$ most recent observations $y_{t-1}, y_{t-2}, \ldots, y_{t-M}$ available at date $t-1$. The window length $M$ is an important tuning parameter: A small choice of $M$ yields a flexible distribution that adapts to changing conditions, while a large choice of $M$ offers more stability and less estimation noise. The choice of $M$ that best governs this trade-off is largely an empirical question. Rather than using a single choice of $M$, we consider forecast combination over a grid of ten plausible choices ranging from $M = 250$ (representing roughly one year of data for each business day) to $M = 2500$ (representing roughly ten years). Similar approaches to averaging over forecasts constructed from different data windows have been studied by \cite{Pesaran2011}, among others.

More specifically, let $M_i$ denote the $i$th choice of window length, where $i = 1, 2, \ldots, n = 10$. A combined forecast distribution with weights $w_1, w_2, \ldots, w_{10}$ then places weight\\ $\sum_{i=1}^{10} w_i~\mathbf{1}(M_i \le j)/M_i$ on observation $y_{t-j}$, where $\mathbf{1}(A)$ is the indicator function of the event $A$. Thus, the weights produced by the combined forecasts are necessarily decreasing in the distance $j$, which seems plausible. Figure \ref{fig:lags} illustrates this setup. As for inflation, we use either equal, static or dynamic combination weights. Static weights are based on a rolling window of 125 observations (i.e., roughly half a year of data for each business day). 

We consider the exchange rate of the US Dollar against the Canadian Dollar, the Euro, the Japanese Yen and the British Pound.\footnote{The data is available from \cite{fred2026}, corresponding to the series codes \textsf{DEXCAUS}, \textsf{DEXUSEU}, \textsf{DEXJPUS} and \textsf{DEXUSUK}.} Table \ref{tab:scores_exch} summarizes the empirical results in terms of CRPS, whereas Table \ref{tab:dm} reports Diebold-Mariano comparisons between pooling methods. Dynamic weights tend to perform best overall, with the difference to equal and dynamic weights being statistically significant for Canada. That said, the results for squared error reported in Table \ref{tab:dm_se} indicate that equal weights tend to outperform the static and dynamic methods. 

As regards calibration, the combined forecasts tend to be underconfident (see Table \ref{tab:calibration}). This effect is statistically significant at the 5\% level for two series (Canada and UK) and all three weighting schemes. 

\begin{table}
\centering
\begin{tabular}{lrrrr}
\toprule
 & CA & EU & JP & UK\\
\midrule
Best individual & 0.1811 & 0.2417 & 0.3446 & 0.2420\\
Worst individual & 0.1842 & 0.2448 & 0.3478 & 0.2440\\ [.2cm]
Linear pool (equal weights) & 0.1824 & 0.2423 & 0.3456 & 0.2424\\
Linear pool (static weights) & 0.1807 & 0.2417 & 0.3457 & 0.2416\\
Linear pool (dynamic weights) & 0.1799 & 0.2419 & 0.3452 & 0.2414\\
\bottomrule
\end{tabular}
	\caption{CRPS for exchange rate forecast distributions based on rolling windows of different lengths $M$. Exchange rates are between the US Dollar and the Canadian Dollar (CA), the Euro (EU), the Japanese Yen (JP) and the British Pound (UK). Evaluation sample ranges from July 12, 2023 to October 24, 2025 ($550$ observations). Best and worst individual methods (= window lengths) are given in Table \ref{tab:bestworst}.\label{tab:scores_exch}}
\end{table}

\begin{table}
	\centering
	\begin{tabular}{lrrrr}
		\toprule
		& CA & EU & JP & UK\\
		\midrule
		Best individual & 250 & 1250 & 750 & 2500\\
		Worst individual & 2500 & 500 & 1750 & 500\\
		\bottomrule
	\end{tabular}
	\caption{Best and worst performing window lengths for the forecasting results in Table \ref{tab:scores_exch}. \label{tab:bestworst}}
	\end{table}

\begin{table}
\centering
\begin{tabular}{lrrr}
\toprule
& Equal-Static & Equal-Dynamic & Static-Dynamic \\ \midrule
& \multicolumn{3}{c}{Inflation} \\
$h = 1$ & 1.54 & 1.92 & -0.34\\
$h = 5$ & -1.51 & -1.00 & 0.47\\ \midrule
& \multicolumn{3}{c}{Exchange Rates} \\
CA & 2.79 & 4.54 & 2.87\\
EU &1.87 & 1.17 & -0.57\\
JP &-0.12 & 0.50 & 0.79\\
UK &1.47 & 1.85 & 0.40\\
\bottomrule
\end{tabular}
\caption{Diebold-Mariano comparisons between weighting methods for the linear pool. For example, the first column considers the null hypothesis that $\mathbb{E}(S_\text{equal} - S_\text{static}) = 0,$ where `equal' and `static' refer to the linear pool using equal or static weights, and $S$ denotes the scoring rule of interest (Energy Score for inflation data, CRPS for exchange rate data). The test statistic is standard normally distributed under the null. A positive test statistic indicates that `static' performs better. See Tables \ref{tab:scores_spf_bvar} and \ref{tab:scores_exch} for further information on the inflation and exchange rate data. \label{tab:dm}}
\end{table}
	
\begin{table}
\centering
\begin{tabular}{lrrr}
\toprule
& Equal-Static & Equal-Dynamic & Static-Dynamic \\ \midrule
& \multicolumn{3}{c}{Inflation} \\
$h = 1$ & 1.55 & 1.76 & -0.35\\
$h = 5$ & -1.14 & -0.62 & 0.60\\\midrule
& \multicolumn{3}{c}{Exchange Rates} \\
CA & -1.71 & -1.33 & 0.11\\
EU &-1.79 & -0.90 & 0.52\\
JP &-1.82 & -2.51 & -1.64\\
UK &-1.14 & -0.99 & -0.34\\
\bottomrule
\end{tabular}
\caption{Like Table \ref{tab:dm}, but using squared error instead of the Energy Score/CRPS as an evaluation criterion. \label{tab:dm_se}}
\end{table}

\begin{table}
\centering
\begin{tabular}{lrrr}
\toprule
& Equal & Static & Dynamic \\ \midrule
& \multicolumn{3}{c}{Inflation} \\
$h = 1$ &-0.01 & 1.42 & 1.20\\
$h = 5$ & 0.51 & 0.26 & 0.42\\
& \multicolumn{3}{c}{Exchange Rates} \\
CA & -6.14 & -2.40 & -3.11\\
EU & -1.14 & -1.65 & -1.51\\
JP &1.84 & -0.09 & 0.59\\
UK &-6.45 & -3.94 & -4.03\\
\bottomrule
\end{tabular}
\caption{$t$ statistics for the null that $\mathbb{E}_\mathbb{Q}(\Delta_S) = 0,$ see Section \ref{ass:realism}. The test statistic is standard normally distributed under the null. A positive test statistic points to overconfidence, a negative test statistic points to underconfidence. See Tables \ref{tab:scores_spf_bvar} and \ref{tab:scores_exch} for further information on the inflation and exchange rate data. \label{tab:calibration}}
\end{table}

\clearpage

\section{Discussion}\label{sec:discussion}

\subsection*{Equal versus Estimated Combination Weights}

Proposition \ref{prop:equidist} offers sufficient conditions under which equal weights are optimal in population. In practice, the relevant question is whether any violations of these conditions are severe enough to justify the costs of weight estimation \citep[c.f.][]{Elliott2025}. These costs are both statistical (in terms of weight estimation error) and conceptual (in terms of additional tuning parameters such as the sample used for weight estimation). In the simulation study of Section \ref{sec:simulation}, estimated weights outperform equal weights even when few observations are available for estimation. In the empirical examples of Section \ref{sec:empirical}, however, estimated weights do not yield clear or consistent benefits over equal weights. While similar results are well documented for point forecast combination under squared error \citep[Section 2.6]{WangEtAl2023}, further empirical results seem necessary to assess the case of distributional forecasting under scoring rules such as the CRPS or energy score. 

\subsection*{Composition of the Linear Pool} 

There is often a prohibitively large set of potential forecast methods, and it is necessary to choose a specific subset of methods that enters a linear pool. This selection is particularly relevant when using equal combination weights, since there is no weight estimation step that can fine-tune the initial selection. Suppose one currently uses an equally weighted pool of $n$ methods, and asks whether to add a further, $n+1$st, method. Based on Equation (\ref{constweights}), the additional method will improve the combination if its individual performance is sufficiently good (i.e., $\mathbb{E}_\mathbb{Q}(S_{n+1})$ is sufficiently small), and the method is sufficiently different from the other methods (i.e., $\sum_{i=1}^n \mathbb{E}_\mathbb{Q}(d_{i,n+1})$ is sufficiently large). While these considerations may provide some guidance, principled techniques for selecting the members of the linear pool in practice remain an important challenge.

\subsection*{Linear versus Quantile Combination of Forecast Distributions} 

As discussed in Section \ref{sec:qp}, quantile combination produces a forecast distribution that is typically less dispersed than the linear pool. Under certain conditions, this is a desirable feature (see \citealt[Proposition 5]{LichtendahlEtAl2013} as well as Proposition \ref{lpvsqp} above). Conversely, if the individual forecasts are overconfident, the linear pool's increased dispersion is desirable. The empirical results on the relative performance of linear and quantile based combinations are mixed, which motivates \cite{Taylor2025} to propose a variant that encompasses both options. 

As discussed throughout this paper, several key properties of the linear pool apply to discrete and continuous, univariate and multivariate settings alike. The linear pool's versatility is appealing, and differs from quantile-based combinations which apply mostly to the univariate continuous case. 

\subsection*{Measures of Forecast Disagreement}
 
Divergences of proper scoring rules are useful for measuring differences between distributions \citep{ThorarinsdottirEtAl2013}. Here we argue to use kernel scores (i.e. a subclass of proper scoring rules) whose divergence functions are necessarily symmetric. Appendix \ref{sec:specific} provides details on the divergence functions for various kernel scores, covering both well-known measures (such as cross-sectional forecast variance for point forecasts and Cram\'er divergence for distributions) and less common ones (such as energy score divergence). Combined with Proposition \ref{prop0}, these formulas yield broadly applicable measures of forecast disagreement as studied in economics \citep[e.g.][]{ClarkMertens2024}.

\subsection*{CRPS} 

The CRPS is an increasingly popular scoring rule in statistics and machine learning. Our finding that various results on squared error generalize to the CRPS is potentially surprising: The CRPS is often viewed as a generalization of \textit{absolute} error, to which it reduces when the forecast distribution is of Dirac type, with point mass at a single value \citep[Section 4.2]{GneitingRaftery2007}. However, as discussed in Section \ref{sec:kernel}, absolute error is not a kernel score except in very special cases such as the Dirac case. Our results thus challenge the common intuition of the CRPS as being similar to absolute error, and instead highlights similarities between the CRPS and \textit{squared} error.

\subsection*{Computation} 

The methods discussed in this paper require to compute entropy and divergence functions associated with kernel scores (see Table \ref{tab:scores2} in the appendix). These computations are relatively straightforward for discrete samples that are popular in practice. Our replication material includes relevant software for \verb|R|, covering the CPRS and energy score in particular. The CRPS entropy and divergence for parametric distributions is more elusive, except for the Gaussian case which we cover in Appendix \ref{crps_gaussian}. \cite{AllenEtAl2024} show that estimating the weights under a kernel score can be cast as a convex quadratic programming problem. While our optimization problem is not convex quadratic in its original form considered in Section \ref{sec:implications_weights}, Appendix \ref{emp_weight} shows how to translate it into such a form. All of our computations use this latter form, together 
with the function \verb|ipop| of the \verb|R| package \verb|kernlab| due to \citealt{KaratzoglouEtAl2004}. 

\bibliographystyle{apalike}

\begin{thebibliography}{}

\bibitem[Abbas, 2009]{Abbas2009}
Abbas, A.~E. (2009).
\newblock A {K}ullback-{L}eibler view of linear and log-linear pools.
\newblock {\em Decision Analysis}, 6:25--37.

\bibitem[Abbate and Marcellino, 2018]{AbbateMarcellino2018}
Abbate, A. and Marcellino, M. (2018).
\newblock Point, interval and density forecasts of exchange rates with time
  varying parameter models.
\newblock {\em Journal of the Royal Statistical Society Series A: Statistics in
  Society}, 181:155--179.

\bibitem[Allen et~al., 2023]{AllenEtAl2023}
Allen, S., Ginsbourger, D., and Ziegel, J. (2023).
\newblock Evaluating forecasts for high-impact events using transformed kernel
  scores.
\newblock {\em SIAM/ASA Journal on Uncertainty Quantification}, 11:906--940.

\bibitem[Allen et~al., 2025]{AllenEtAl2024}
Allen, S., Ginsbourger, D., and Ziegel, J. (2025).
\newblock Efficient pooling of predictions via kernel embeddings.
\newblock {\em Transactions on Machine Learning Research}.

\bibitem[Banternghansa and McCracken, 2009]{banternghansa2009forecast}
Banternghansa, C. and McCracken, M.~W. (2009).
\newblock Forecast disagreement among {FOMC} members.
\newblock Federal Reserve Bank of St. Louis, working paper 2009-059A.

\bibitem[Bates and Granger, 1969]{Bates-Granger-69}
Bates, J.~M. and Granger, C. W.~J. (1969).
\newblock The combination of forecasts.
\newblock {\em Journal of the Operational Research Society}, 20:451--468.

\bibitem[Bellemare et~al., 2017]{BellemareEtAl2017}
Bellemare, M.~G., Danihelka, I., Dabney, W., Mohamed, S., Lakshminarayanan, B.,
  Hoyer, S., and Munos, R. (2017).
\newblock The {C}r\'amer distance as a solution to biased {W}asserstein
  gradients.
\newblock Preprint, arXiv:1705.10743.

\bibitem[Bentzien and Friederichs, 2014]{BentzienFriederichs2014}
Bentzien, S. and Friederichs, P. (2014).
\newblock Decomposition and graphical portrayal of the quantile score.
\newblock {\em Quarterly Journal of the Royal Meteorological Society},
  140:1924--1934.

\bibitem[Berrisch and Ziel, 2023]{BerrischZiel2023}
Berrisch, J. and Ziel, F. (2023).
\newblock {CRPS} learning.
\newblock {\em Journal of Econometrics}, 237:105221.

\bibitem[Brier, 1950]{Brier1950}
Brier, G.~W. (1950).
\newblock Verification of forecasts expressed in terms of probability.
\newblock {\em Monthly Weather Review}, 78:1--3.

\bibitem[Claeskens et~al., 2016]{ClaeskensEtAl2016}
Claeskens, G., Magnus, J.~R., Vasnev, A.~L., and Wang, W. (2016).
\newblock The forecast combination puzzle: A simple theoretical explanation.
\newblock {\em International Journal of Forecasting}, 32:754--762.

\bibitem[Clark and Mertens, 2024]{ClarkMertens2024}
Clark, T.~E. and Mertens, E. (2024).
\newblock Survey expectations and forecast uncertainty.
\newblock In Clements, M.~P. and Galv{\~a}o, A.~B., editors, {\em Handbook of
  Research Methods and Applications in Macroeconomic Forecasting}, pages
  305--333. Edward Elgar Publishing.

\bibitem[Clements et~al., 2023]{Clements2023}
Clements, M.~P., Rich, R.~W., and Tracy, J.~S. (2023).
\newblock Surveys of professionals.
\newblock In Bachmann, R., van~der Klaauw, W., and Topa, G., editors, {\em
  Handbook of Economic Expectations}, pages 71--106. Elsevier.

\bibitem[Coibion and Gorodnichenko, 2012]{CoibionEtAl2012}
Coibion, O. and Gorodnichenko, Y. (2012).
\newblock What can survey forecasts tell us about information rigidities?
\newblock {\em Journal of Political Economy}, 120:116--159.

\bibitem[Cramer et~al., 2022]{CramerEtAl2022}
Cramer, E.~Y., Ray, E.~L., Lopez, V.~K., Bracher, J., Brennen, A.,
  Castro~Rivadeneira, A.~J., Gerding, A., Gneiting, T., House, K.~H., Huang,
  Y., et~al. (2022).
\newblock Evaluation of individual and ensemble probabilistic forecasts of
  {COVID-19} mortality in the {U}nited {S}tates.
\newblock {\em Proceedings of the National Academy of Sciences},
  119:e2113561119.

\bibitem[Cumings-Menon et~al., 2021]{CumingsEtAl2021}
Cumings-Menon, R., Shin, M., and Sill, D.~K. (2021).
\newblock Measuring disagreement in probabilistic and density forecasts.
\newblock Federal Reserve Bank of Philadelphia, working paper 21-03.

\bibitem[Dawid, 2007]{Dawid2007}
Dawid, A.~P. (2007).
\newblock The geometry of proper scoring rules.
\newblock {\em Annals of the Institute of Statistical Mathematics}, 59:77--93.

\bibitem[Del~Negro and Primiceri, 2015]{delNegro2015}
Del~Negro, M. and Primiceri, G.~E. (2015).
\newblock Time varying structural vector autoregressions and monetary policy: A
  corrigendum.
\newblock {\em The Review of Economic Studies}, 82:1342--1345.

\bibitem[Diebold and Mariano, 1995]{DieboldMariano1995}
Diebold, F.~X. and Mariano, R.~S. (1995).
\newblock Comparing predictive accuracy.
\newblock {\em Journal of Business \& Economic Statistics}, 13:253--263.

\bibitem[Diebold et~al., 2025]{DieboldEtAl2025}
Diebold, F.~X., Mora, A., and Shin, M. (2025).
\newblock On the wisdom of crowds (of economists).
\newblock Preprint, arXiv:2503.09287.

\bibitem[Elliott et~al., 2013]{ElliottEtAl2013}
Elliott, G., Gargano, A., and Timmermann, A. (2013).
\newblock Complete subset regressions.
\newblock {\em Journal of Econometrics}, 177:357--373.

\bibitem[Elliott and Liao, 2025]{Elliott2025}
Elliott, G. and Liao, J. (2025).
\newblock Combining forecasts-on why averaging beats optimal linear weights.
\newblock Preprint, available at
  \url{https://econweb.ucsd.edu/~gelliott/Combination.pdf} (version May 5,
  2025).

\bibitem[Engle, 1983]{Engle1983}
Engle, R.~F. (1983).
\newblock Estimates of the variance of {US} inflation based upon the {ARCH}
  model.
\newblock {\em Journal of Money, Credit and Banking}, 15:286--301.

\bibitem[Epstein, 1969]{Epstein1969}
Epstein, E.~S. (1969).
\newblock A scoring system for probability forecasts of ranked categories.
\newblock {\em Journal of Applied Meteorology}, 8:985--987.

\bibitem[Fakoor et~al., 2023]{FakoorEtAl2023}
Fakoor, R., Kim, T., Mueller, J., Smola, A.~J., and Tibshirani, R.~J. (2023).
\newblock Flexible model aggregation for quantile regression.
\newblock {\em Journal of Machine Learning Research}, 24:1--45.

\bibitem[{Federal Reserve Bank of Philadelphia}, 2025a]{PhiladelphiaFed2024b}
{Federal Reserve Bank of Philadelphia} (2025a).
\newblock Real-time data set for macroeconomists.
\newblock Data set, available at
  \url{https://www.philadelphiafed.org/surveys-and-data/real-time-data-research/real-time-data-set-for-macroeconomists}
  (last accessed: January 3, 2025).

\bibitem[{Federal Reserve Bank of Philadelphia}, 2025b]{PhiladelphiaFed2024a}
{Federal Reserve Bank of Philadelphia} (2025b).
\newblock Survey of professional forecasters.
\newblock Data set, available at
  \url{https://www.philadelphiafed.org/surveys-and-data/real-time-data-research/survey-of-professional-forecasters}
  (last accessed: January 3, 2025).

\bibitem[{Federal Reserve Bank of St. Louis}, 2026]{fred2026}
{Federal Reserve Bank of St. Louis} (2026).
\newblock {FRED}.
\newblock Data base, available at \url{https://fred.stlouisfed.org}.

\bibitem[Genest and Zidek, 1986]{genest1986combining}
Genest, C. and Zidek, J.~V. (1986).
\newblock Combining probability distributions: A critique and an annotated
  bibliography.
\newblock {\em Statistical Science}, 1:114--135.

\bibitem[Geweke and Amisano, 2011]{GewekeAmisano2011}
Geweke, J. and Amisano, G. (2011).
\newblock Optimal prediction pools.
\newblock {\em Journal of Econometrics}, 164:130--141.

\bibitem[Giacomini and Rossi, 2010]{GiacominiRossi2010}
Giacomini, R. and Rossi, B. (2010).
\newblock Forecast comparisons in unstable environments.
\newblock {\em Journal of Applied Econometrics}, 25:595--620.

\bibitem[Gneiting, 2012]{Gneiting2012}
Gneiting, T. (2012).
\newblock On the cover-hart inequality: What's a sample of size one worth?
\newblock {\em Stat}, 1:12--17.

\bibitem[Gneiting and Raftery, 2007]{GneitingRaftery2007}
Gneiting, T. and Raftery, A.~E. (2007).
\newblock Strictly proper scoring rules, prediction, and estimation.
\newblock {\em Journal of the American Statistical Association}, 102:359--378.

\bibitem[Gneiting and Ranjan, 2011]{GneitingRanjan2011}
Gneiting, T. and Ranjan, R. (2011).
\newblock Comparing density forecasts using threshold-and quantile-weighted
  scoring rules.
\newblock {\em Journal of Business \& Economic Statistics}, 29:411--422.

\bibitem[Gneiting and Ranjan, 2013]{GneitingRanjan2013}
Gneiting, T. and Ranjan, R. (2013).
\newblock Combining predictive distributions.
\newblock {\em Electronic Journal of Statistics}, 7:1747--1782.

\bibitem[Gneiting and Resin, 2023]{GneitingResin2023}
Gneiting, T. and Resin, J. (2023).
\newblock Regression diagnostics meets forecast evaluation: Conditional
  calibration, reliability diagrams, and coefficient of determination.
\newblock {\em Electronic Journal of Statistics}, 17:3226--3286.

\bibitem[Grushka-Cockayne et~al., 2017]{GrushkaEtAl2017}
Grushka-Cockayne, Y., Lichtendahl~Jr, K.~C., Jose, V. R.~R., and Winkler, R.~L.
  (2017).
\newblock Quantile evaluation, sensitivity to bracketing, and sharing business
  payoffs.
\newblock {\em Operations Research}, 65:712--728.

\bibitem[Hall and Mitchell, 2007]{HallMitchell2007}
Hall, S.~G. and Mitchell, J. (2007).
\newblock Combining density forecasts.
\newblock {\em International Journal of Forecasting}, 23:1--13.

\bibitem[Hansen and Lunde, 2005]{HansenLunde2005}
Hansen, P.~R. and Lunde, A. (2005).
\newblock A forecast comparison of volatility models: does anything beat a
  {GARCH}(1, 1)?
\newblock {\em Journal of Applied Econometrics}, 20:873--889.

\bibitem[Held et~al., 2010]{HeldEtAl2010}
Held, L., Rufibach, K., and Balabdaoui, F. (2010).
\newblock A score regression approach to assess calibration of continuous
  probabilistic predictions.
\newblock {\em Biometrics}, 66:1295--1305.

\bibitem[Holzmann and Eulert, 2014]{HolzmannEulert2014}
Holzmann, H. and Eulert, M. (2014).
\newblock The role of the information set for forecasting—with applications
  to risk management.
\newblock {\em The Annals of Applied Statistics}, pages 595--621.

\bibitem[Hora, 2004]{Hora2004}
Hora, S.~C. (2004).
\newblock Probability judgments for continuous quantities: Linear combinations
  and calibration.
\newblock {\em Management Science}, 50:597--604.

\bibitem[Huber and Z{\"o}rner, 2019]{Huber2019}
Huber, F. and Z{\"o}rner, T.~O. (2019).
\newblock Threshold cointegration in international exchange rates: A {B}ayesian
  approach.
\newblock {\em International Journal of Forecasting}, 35:458--473.

\bibitem[Karatzoglou et~al., 2004]{KaratzoglouEtAl2004}
Karatzoglou, A., Smola, A., Hornik, K., and Zeileis, A. (2004).
\newblock kernlab-an {S4} package for kernel methods in {R}.
\newblock {\em Journal of Statistical Software}, 11:1--20.

\bibitem[Kn{\"u}ppel et~al., 2022]{KnueppelEtAl2022}
Kn{\"u}ppel, M., Kr{\"u}ger, F., and Pohle, M.-O. (2022).
\newblock Score-based calibration testing for multivariate forecast
  distributions.
\newblock Preprint, arXiv:2211.16362.

\bibitem[Knüppel and Krüger, 2022]{KnueppelKrueger2022}
Knüppel, M. and Krüger, F. (2022).
\newblock Forecast uncertainty, disagreement, and the linear pool.
\newblock {\em Journal of Applied Econometrics}, 37:23--41.

\bibitem[Koenker and Bassett~Jr, 1978]{KoenkerBassett1978}
Koenker, R. and Bassett~Jr, G. (1978).
\newblock Regression quantiles.
\newblock {\em Econometrica}, 46:33--50.

\bibitem[Kr\"uger, 2015]{Krueger2015}
Kr\"uger, F. (2015).
\newblock {\em bvarsv: {B}ayesian Analysis of a Vector Autoregressive Model
  with Stochastic Volatility and Time-Varying Parameters}.
\newblock {R} package version 1.1.

\bibitem[Kr{\"u}ger et~al., 2017]{KruegerEtAl2017}
Kr{\"u}ger, F., Clark, T.~E., and Ravazzolo, F. (2017).
\newblock Using entropic tilting to combine {BVAR} forecasts with external
  nowcasts.
\newblock {\em Journal of Business \& Economic Statistics}, 35:470--485.

\bibitem[Kr{\"u}ger and Pavlova, 2024]{Krueger2024}
Kr{\"u}ger, F. and Pavlova, L. (2024).
\newblock Quantifying subjective uncertainty in survey expectations.
\newblock {\em International Journal of Forecasting}, 40:796--810.

\bibitem[Kr{\"u}ger and Ziegel, 2021]{KruegerZiegel2021}
Kr{\"u}ger, F. and Ziegel, J.~F. (2021).
\newblock Generic conditions for forecast dominance.
\newblock {\em Journal of Business \& Economic Statistics}, 39:972--983.

\bibitem[Krüger, 2013]{Krueger2013}
Krüger, F. (2013).
\newblock {\em Four Essays on Probabilistic Forecasting in Econometrics}.
\newblock PhD thesis, Universität Konstanz.

\bibitem[Lahiri and Sheng, 2010]{lahiri2010}
Lahiri, K. and Sheng, X. (2010).
\newblock Measuring forecast uncertainty by disagreement: The missing link.
\newblock {\em Journal of Applied Econometrics}, 25:514--538.

\bibitem[Lichtendahl~Jr et~al., 2013]{LichtendahlEtAl2013}
Lichtendahl~Jr, K.~C., Grushka-Cockayne, Y., and Winkler, R.~L. (2013).
\newblock Is it better to average probabilities or quantiles?
\newblock {\em Management Science}, 59:1594--1611.

\bibitem[Matheson and Winkler, 1976]{MathesonWinkler1976}
Matheson, J.~E. and Winkler, R.~L. (1976).
\newblock Scoring rules for continuous probability distributions.
\newblock {\em Management Science}, 22:1087--1096.

\bibitem[Mitchell et~al., 2026]{MitchellEtAl2024}
Mitchell, J., Shiroff, T., and Braitsch, H. (2026).
\newblock Practice makes perfect: Learning effects with household point and
  density forecasts of inflation.
\newblock {\em International Journal of Forecasting}, 42:315--329.

\bibitem[Mosler and Mozharovskyi, 2022]{Mosler2022}
Mosler, K. and Mozharovskyi, P. (2022).
\newblock Choosing among notions of multivariate depth statistics.
\newblock {\em Statistical Science}, 37:348--368.

\bibitem[Murphy and Winkler, 1987]{MurphyWinkler1987}
Murphy, A.~H. and Winkler, R.~L. (1987).
\newblock A general framework for forecast verification.
\newblock {\em Monthly Weather Review}, 115:1330--1338.

\bibitem[Neyman and Roughgarden, 2023]{NeymanRoughgarden2023}
Neyman, E. and Roughgarden, T. (2023).
\newblock From proper scoring rules to max-min optimal forecast aggregation.
\newblock {\em Operations Research}, 71:2175--2195.

\bibitem[Pesaran and Pick, 2011]{Pesaran2011}
Pesaran, M.~H. and Pick, A. (2011).
\newblock Forecast combination across estimation windows.
\newblock {\em Journal of Business \& Economic Statistics}, 29:307--318.

\bibitem[Pettigrew, 2019]{Pettigrew2019}
Pettigrew, R. (2019).
\newblock Aggregating incoherent agents who disagree.
\newblock {\em Synthese}, 196:2737--2776.

\bibitem[Primiceri, 2005]{Primiceri2005}
Primiceri, G.~E. (2005).
\newblock Time varying structural vector autoregressions and monetary policy.
\newblock {\em The Review of Economic Studies}, 72:821--852.

\bibitem[Ranjan and Gneiting, 2010]{RanjanGneiting2010}
Ranjan, R. and Gneiting, T. (2010).
\newblock Combining probability forecasts.
\newblock {\em Journal of the Royal Statistical Society Series B: Statistical
  Methodology}, 72:71--91.

\bibitem[Resin et~al., 2026]{ResinEtAl2024}
Resin, J., Wolffram, D., Bracher, J., and Dimitriadis, T. (2026).
\newblock Shift-dispersion decompositions of {W}asserstein and {C}ram\'er
  distances.
\newblock {\em Statistical Science}.

\bibitem[Rossi, 2013]{Rossi2013}
Rossi, B. (2013).
\newblock Exchange rate predictability.
\newblock {\em Journal of Economic Literature}, 51:1063--1119.

\bibitem[Schefzik et~al., 2013]{SchefzikEtAl2013}
Schefzik, R., Thorarinsdottir, T.~L., and Gneiting, T. (2013).
\newblock Uncertainty quantification in complex simulation models using
  ensemble copula coupling.
\newblock {\em Statistical Science}, 28:616--640.

\bibitem[Sejdinovic et~al., 2013]{SejdinovicEtAl2013}
Sejdinovic, D., Sriperumbudur, B., Gretton, A., and Fukumizu, K. (2013).
\newblock Equivalence of distance-based and {RKHS}-based statistics in
  hypothesis testing.
\newblock {\em The Annals of Statistics}, 41:2263--2291.

\bibitem[Shoja and Soofi, 2017]{ShojaSoofi2017}
Shoja, M. and Soofi, E.~S. (2017).
\newblock Uncertainty, information, and disagreement of economic forecasters.
\newblock {\em Econometric Reviews}, 36:796--817.

\bibitem[Stone, 1961]{Stone1961}
Stone, M. (1961).
\newblock The opinion pool.
\newblock {\em The Annals of Mathematical Statistics}, 32:1339--1342.

\bibitem[Sz{\'e}kely and Rizzo, 2017]{SzekelyRizzo2017}
Sz{\'e}kely, G.~J. and Rizzo, M.~L. (2017).
\newblock The energy of data.
\newblock {\em Annual Review of Statistics and Its Application}, 4:447--479.

\bibitem[Taylor, 2026]{Taylor2025b}
Taylor, J.~W. (2026).
\newblock Probabilistic forecast aggregation with statistical depth.
\newblock {\em European Journal of Operational Research}, 328:460--476.

\bibitem[Taylor and Meng, 2025]{Taylor2025}
Taylor, J.~W. and Meng, X. (2025).
\newblock Angular combining of forecasts of probability distributions.
\newblock {\em Management Science}, 72:2111--2133.

\bibitem[Thorarinsdottir et~al., 2013]{ThorarinsdottirEtAl2013}
Thorarinsdottir, T.~L., Gneiting, T., and Gissibl, N. (2013).
\newblock Using proper divergence functions to evaluate climate models.
\newblock {\em SIAM/ASA Journal on Uncertainty Quantification}, 1:522--534.

\bibitem[Tibshirani, 2023]{Tibshirani2023}
Tibshirani, R. (2023).
\newblock Forecast scoring and calibration.
\newblock Course notes, University of California at Berkeley, available at
  \url{https://www.stat.berkeley.edu/~ryantibs/statlearn-s23/lectures/calibration.pdf}.

\bibitem[Timmermann, 2006]{Timmermann2006}
Timmermann, A. (2006).
\newblock Forecast combinations.
\newblock In Elliott, G., Granger, C.~W., and Timmermann, A., editors, {\em
  Handbook of Economic Forecasting}, volume~1, chapter~4, pages 135--196.
  Elsevier.

\bibitem[Tsyplakov, 2013]{Tsyplakov2013}
Tsyplakov, A. (2013).
\newblock Evaluation of probabilistic forecasts: proper scoring rules and
  moments.
\newblock Preprint, available at \url{https://dx.doi.org/10.2139/ssrn.2236605}.

\bibitem[Waghmare and Ziegel, 2026]{WaghmareZiegel2025}
Waghmare, K. and Ziegel, J. (2026).
\newblock Proper scoring rules for estimation and forecast evaluation.
\newblock {\em Annual Review of Statistics and Its Application}, 13:271--296.

\bibitem[Wallis, 2005]{wallis2005}
Wallis, K.~F. (2005).
\newblock Combining density and interval forecasts: A modest proposal.
\newblock {\em Oxford Bulletin of Economics and Statistics}, 67:983--994.

\bibitem[Wang et~al., 2023]{WangEtAl2023}
Wang, X., Hyndman, R.~J., Li, F., and Kang, Y. (2023).
\newblock Forecast combinations: An over 50-year review.
\newblock {\em International Journal of Forecasting}, 39:1518--1547.

\bibitem[Zarnowitz and Lambros, 1987]{ZarnowitzLambros1987}
Zarnowitz, V. and Lambros, L.~A. (1987).
\newblock Consensus and uncertainty in economic prediction.
\newblock {\em Journal of Political Economy}, 95:591--621.

\bibitem[Zeileis, 2004]{Zeileis2004}
Zeileis, A. (2004).
\newblock Econometric computing with {HC} and {HAC} covariance matrix
  estimators.
\newblock {\em Journal of Statistical Software}, 11:1--17.

\bibitem[Zeileis et~al., 2020]{ZeileisEtAl2020}
Zeileis, A., K{\"o}ll, S., and Graham, N. (2020).
\newblock Various versatile variances: an object-oriented implementation of
  clustered covariances in {R}.
\newblock {\em Journal of Statistical Software}, 95:1--36.

\end{thebibliography}

\newpage

\begin{appendix}
	
\section*{Appendix}
	
\setcounter{page}{1}
	
\section{Divergence and Entropy for Specific Scoring Rules}\label{sec:specific}

This section provides specifics for various scoring rules $S_L$ of applied interest (see Table \ref{tab:scores}). While Table \ref{tab:scores2} presents expressions for the divergence and entropy of each scoring rule, Sections \ref{sec:se} to \ref{sec:energy_score} provide further context.

\begin{table}[h]
	\centering
	\renewcommand{\arraystretch}{1.8}
	\begin{tabular}{lcc}
		Scoring rule & Divergence & Entropy \\ \toprule
		SE & $(\mu_a-\mu_b)^2$ & $\sigma^2$ \\
		Mult. SE& $(\mu_a-\mu_b)'A(\mu_a-\mu_b)$& $\text{trace}(A\Sigma)$\\
		CRPS & $\int_{-\infty}^\infty (F_a(z)-F_b(z))^2~dz$ & $\int_{-\infty}^\infty F(z)(1-F(z))~dz$\\
		Energy Score & $\mathbb{E}_{F_a,F_b}\Big[||\widetilde{X}-X||\Big] - \frac{1}{2} \mathbb{E}_{F_a}\Big[||\widetilde{X}-X||\Big] - \frac{1}{2}\mathbb{E}_{F_b}\Big[||\widetilde{X}-X||\Big]$ &
 $\frac{1}{2}\mathbb{E}_{F}\Big[||\widetilde{X}-X||\Big]$\\
		Brier Score & $\frac{1}{2} \sum_{l=1}^k (p_{l,a}-p_{l,b})^2$ & $\frac{1}{2} \sum_{l=1}^k p_l(1-p_l)$  \\
		RPS & $\sum_{l=1}^k (P_{l,a} - P_{l,b})^2$ & $\sum_{l=1}^k P_l(1-P_l)$ \\ \bottomrule
	\end{tabular}
	\caption{Divergence and entropy terms for several kernel scoring rules. Indices $a$ and $b$ denote two forecast distributions. See Sections \ref{sec:se} to \ref{sec:rps} for context and details on notation. \label{tab:scores2}}
\end{table}

\subsection{Squared Error}\label{sec:se}

As noted by \citet[p.~14]{Gneiting2012}, squared error loss corresponds to the kernel function $L(z, \widetilde{z}) = (\widetilde{z}-z)^2$ and real-valued univariate outcomes, i.e. $\Omega = \mathbb{R}$. To see this, observe that $\mathbb{E}_F[L(X,y)] = \mathbb{E}_F\big[(y-X)^2\big] = y^2 - 2y\mu + \mu^2 + \sigma^2,$ where $\mu= \mathbb{E}_F[X]$ and $\sigma^2 = V_F[X]$. Furthermore, $\mathbb{E}_F[L(X,\widetilde{X})] = 2 \sigma^2$. From Equation (\ref{score}), we thus obtain squared error loss $\text{SE}(F, y) = (y-\mu)^2.$

Thus for squared error, entropy coincides with variance. The linear pool's variance, and its decomposition into `average variance' and `disagreement between point forecasts', has been discussed in detail in the literature on macroeconomic survey forecasts, see e.g. \cite{ZarnowitzLambros1987}, \cite{wallis2005} and \cite{lahiri2010}.

\subsection{Multivariate Squared Error}\label{sec:se_mult}

Let $\Omega = \mathbb{R}^k$, where $k$ is a finite integer, and consider the kernel function $$L_A(z, \widetilde{z}) = (\widetilde{z}-z)'A(\widetilde{z}-z),$$ where $A$ is a symmetric positive definite matrix. If $A = I_k$, this is the generalized version of the Energy Score, with $\beta = 2$ in the notation of \citet[Section 5.1]{GneitingRaftery2007}. The kernel function $L_A$ remains negative definite for any positive definite matrix $A$. (To see this, consider the Cholesky decomposition of $A = G G'$. Then $L_A(z, \widetilde z) = (\widetilde z'G - z'G)(G'\widetilde z - G' z) = L(u, \widetilde u)$ for $u = G'z, \widetilde u = G'\widetilde z$. The definition of negative definite kernels and the fact that $L(z, \widetilde z) = ||\widetilde z -  z||^2 = \sqrt{\sum_{l=1}^k (\widetilde z_l-z_l)^2}$ is a negative definite kernel on $\mathbb{R}^k$ \citep[see][Table 1]{Gneiting2012} then imply that $L_A(z, z)$ is a negative definite kernel on $\mathbb{R}^k$ as well.)
The kernel function yields the scoring rule 
$$S_{L_A}(F, y) = \left(y - \mu\right)'A\left(y - \mu\right),$$
which evaluates the $k$-variate mean vector $\mu = \mathbb{E}_{F}[X]$ implied by $F$. This scoring rule corresponds to the negative log likelihood of a $k$-variate Gaussian random variable with known covariance matrix $A^{-1}$. It is potentially useful to aggregate forecasting performance across several variables (elements of $X$), with the matrix $A$ accounting for scale differences across variables, or correlation between them. The divergence function for this scoring rule is given by 
$$d(F_a, F_b) = (\mu_a - \mu_b)'A(\mu_a - \mu_b),$$
where $F_a$ and $F_b$ are two distributions with respective mean vectors $\mu_a$ and $\mu_b$. When $A$ is set to the inverse of an appropriate empirical covariance matrix, $\sqrt{d(F_a, F_b)}$ is the Mahalonobis distance between $F_a$ and $F_b$. The latter has been used by studies such as \cite{banternghansa2009forecast} and \cite{Clements2023} to measure multivariate forecast disagreement. Interestingly, though, our expression for $D$ suggests to use the square of the Mahalonobis distance instead.

\subsection{CRPS}\label{sec:crps}

The CRPS \citep{MathesonWinkler1976} corresponds to the kernel function $L(z, \widetilde{z}) = |\widetilde{z}-z|$ and $\Omega = \mathbb{R}$ \citep[Section 5.1]{GneitingRaftery2007}. From (\ref{score}), the CRPS is given by
$$\text{CRPS}(F, y) = \mathbb{E}_F\Big[|y-X|\Big]- \frac{1}{2} \mathbb{E}_F\Big[|\widetilde{X}-X|\Big],$$
where $X$ and $\widetilde{X}$ are two independent draws from $F$. This formula is often called the kernel representation of the CRPS. An alternative, equivalent representation of the CPRS is 
$$\text{CRPS}(F, y) = \int_{-\infty}^\infty (\mathbf{1}(z \ge y) - F(z))^2~dz,$$
where $\mathbf{1}(A)$ is the indicator function of the event $A$. The entropy function of the CRPS is given by 
$$\mathbb{E}_{F}[\text{CRPS}(F, X)] = \frac{1}{2}\mathbb{E}_F\Big[|\widetilde{X}-X|\Big] = \int_{-\infty}^\infty F(z)(1-F(z))~dz.$$

The divergence function associated with the CRPS, $$d(F_a, F_b) = \int_{-\infty}^\infty (F_a(z)-F_b(z))^2~dz,$$
is the Cram\'er distance between $F_a$ and $F_b$; see \cite{ThorarinsdottirEtAl2013}, \cite{BellemareEtAl2017} and \cite{ResinEtAl2024} for properties and applications. 

\subsection{Energy Score}\label{sec:energy_score}

The Energy Score is a multivariate generalization of the CRPS. It is a kernel score with $L(z,\widetilde{z}) = ||\widetilde{z}-z|| = \sqrt{\sum_{l=1}^k (\widetilde z_l-z_l)^2}$ and $\Omega = \mathbb{R}^k$ \citep[Section 5.1]{GneitingRaftery2007}. \citeauthor{GneitingRaftery2007} consider a more general formulation of the Energy Score. The variant considered here obtains when setting $\beta = 1$ (in their notation). From (\ref{score}), the Energy Score is given by 
$$\text{ES}(F, y) = \mathbb{E}_{F}\Big[||y - X||\Big] - \frac{1}{2}\mathbb{E}_F\Big[||\widetilde X - X||\Big];$$
as before, $X$ and $\widetilde{X}$ are two independent draws from $F$.
\cite{KnueppelEtAl2022} propose using the Energy Score's entropy function for testing the calibration of multivariate forecast distributions. 
For the Energy Score, the disagreement term $D$ at (\ref{disagreement_lp}) is given by 
$$\frac{1}{2}\mathbb{E}_{F_w}\Big[||\widetilde{X} - X||\Big] - \frac{1}{2}\sum_{i=1}^n w_i \mathbb{E}_{F_i}\Big[||\widetilde{X}-X||\Big].$$
While we are not aware of existing applications of this disagreement measure, it seems useful for comparing multivariate forecast distributions, as considered by \cite{CumingsEtAl2021} in the context of vector autoregressions for macroeconomic variables. We present empirical evidence on this disagreement measure in Section \ref{sec:illustration}.

\subsection{Brier Score}\label{sec:brier_score}

The Brier score is a scoring rule for probabilities of unordered categorical outcomes. While it is most popular in the binary case originally considered by \cite{Brier1950}, it is readily applicable to an outcome variable $Y$ that  takes $k$ distinct values. The Brier score obtains when setting $\Omega = \{1, 2, \ldots, k\}$ and $L(z, \widetilde{z}) = \mathbf{1}(\widetilde{z} \neq z)$ \citep{Gneiting2012}. Importantly, the outcomes in $\Omega$ are interpreted as interchangeable labels, rather than integers. In Table \ref{tab:scores2}, we identify a forecast distribution $F$ of a categorical outcome with a $k \times 1$ vector $\underline{p} = \begin{pmatrix} p_{1}, &\ldots,& p_{k}\end{pmatrix}'$, such that the elements of $\underline{p}$ are nonnegative and sum to one. 

\subsection{Ranked Probability Score}\label{sec:rps}

We next consider the case of ordered categorical outcomes. Examples include bond ratings in finance and binned numerical data (see \citealt{Krueger2024}). In this setup, the outcome space $\Omega = \{1, 2, \ldots, k\}$ remains the same as for the Brier score, but the outcomes in $\Omega$ are interpreted as ordinal. This means that any two outcomes can be ranked, but it is not possible to quantify their difference. The Ranked Probability Score \citep[RPS;][]{Epstein1969} is tailored to this setup. Its kernel function $L(z, \widetilde z) = |\widetilde z - z|$ is the same as for the CRPS, but it is used in conjunction with the specific outcome space $\Omega$ just described. In Table \ref{tab:scores2}, we denote by $P_{l} = \sum_{r=1}^l p_{r}$ the cumulative probability of the first $l$ elements of $\underline{p}$. The RPS' use of cumulative probabilities reflects the fact that it attaches an ordinal interpretation to the categories. This feature is distinct from the Brier Score, which views the categories' labels as interchangeable. \cite{Krueger2024} recommend using the RPS' entropy function, which they call ERPS, as an uncertainty measure for binned macroeconomic data. In an insightful discussion of their empirical results on inflation expectations, \citet[Section 3.2]{MitchellEtAl2024} recently conjectured that a disagreement-variance type decomposition of the linear pool's ERPS exists. Our results confirm this conjecture.

\section{Proofs}

\label{sec:proofs}

\subsection*{Proposition \ref{prop:d}}

\begin{eqnarray*}
	D_\text{gen}(h) -D_{\text{gen}}(p_w) &=& {p_w}' \underline{L}h - \frac{1}{2} h' \underline{L}h- \frac{1}{2}{p_w}' \underline{L} p_w \\
	&=& -\frac{1}{2}(h-p_w)' \underline{L}(h-p_w)\\
	&=& -\frac{1}{2} \sum_{j=1}^{n_\Omega}\sum_{l=1}^{n_\Omega} c_jc_l~L(x_{(j)}, x_{(l)}) \\
	&\ge & 0,
\end{eqnarray*}
where $c_j = h_j - p_{w,j}$ is the difference between the $j$th elements of $h$ and $p_w$, with $\sum_{j=1}^{n_\Omega} c_j = 1-1= 0$. The inequality then follows from the definition of a negative definite kernel function $L$ (see Section \ref{sec:kernel}).

\subsection*{Proposition \ref{prop0}}

The first statement of part (a) follows from Equation (\ref{divergence}). The second statement holds because $S_L(F, y) \ge 0$ (see Section \ref{sec:kernel}). As regards part (b), Equation (\ref{disagreement_lp}) follows from Equations (\ref{divergence}) and (\ref{entropy_pool}). Equation (\ref{disagreement_pairs_lp}) holds because 
\begin{eqnarray*}
	D &=& \sum_{i=1}^n w_i \left\{\mathbb{E}_{F_i, F_w}[L(X, \widetilde{X})] - \frac{1}{2}\mathbb{E}_{F_i}[L(X,\widetilde{X})]- \frac{1}{2}\mathbb{E}_{F_w}[L(X,\widetilde{X})]\right\}\nonumber\\ 
	&=& \frac{1}{2}\mathbb{E}_{F_w}[L(X,\widetilde{X})] - \frac{1}{2} \sum_{i=1}^n w_i~\mathbb{E}_{F_i}[L(X, \widetilde{X})] \\
	&=& \frac{1}{2}\sum_{i=1}^n\sum_{j=1}^n w_iw_j \underbrace{\left\{\mathbb{E}_{F_i, F_j}[L(X, \widetilde{X})] - \frac{1}{2}\mathbb{E}_{F_i}[L(X,\widetilde{X})]- \frac{1}{2}\mathbb{E}_{F_j}[L(X,\widetilde{X})]\right\}}_{=d(F_i, F_j)}\nonumber\\
	&=& \sum_{i=1}^{n-1}\sum_{j>i} w_iw_j~d(F_i, F_j)\nonumber.
\end{eqnarray*}	

\subsection*{Proposition \ref{co_ent}}

\begin{eqnarray*}
\text{ent}(F_w^q) &=& \int_{-\infty}^\infty S(F_w^q, y) ~dF_w^q(y) \\
& \le & \int_{-\infty}^\infty S(F_w, y) ~dF_w^q(y) \\
& \le & \int_{-\infty}^\infty S(F_w, y) ~dF_w(y)\\
& = & \text{ent}(F_w),
\end{eqnarray*}
where the first inequality holds because $S$ is proper, and the second inequality holds because $S(F_w, y)$ is convex with respect to $y$ and $F_w$ is greater than $F_w^q$ in convex order \citep[Lemma 1]{LichtendahlEtAl2013}.

\subsection*{Example for Section \ref{sec:qp}}

Here we provide an example to demonstrate that the entropy of a quantile based combination $F_w^q$ may be strictly larger than the entropy of the linear pool $F_w$. The example is based on the threshold weighted CRPS \citep[twCRPS;][]{GneitingRanjan2011} given by
$$\text{twCRPS}(F, y) = \int_{-\infty}^\infty (\mathbf{1}(z \ge y - F(z))^2 \nu(z) dz,$$
where $\nu: \mathbb{R} \rightarrow \mathbb{R}_+$ is a nonnegative weight function. The twCRPS is a proper (but not necessarily strictly proper) scoring rule relative to typical classes of distributions. \citet[Proposition 1]{AllenEtAl2023} show that the twCRPS is a kernel score. Its entropy function is given by 
$$\text{ent}(F) = \int_{-\infty}^\infty \nu(z) F(z)(1-F(z))~dz.$$
The latter formula shows that if there is a subset $A \subset \mathbb{R}$ so that $F_w^q(z)(1-F_w^q(z)) > F_w(z)(1-F_w(z))$ for all $z \in A$, then using the weight function $\nu(z) = \mathbf{1}(z \in A)$ will yield an example with $\text{ent}(F_w^q) > \text{ent}(F_w)$. One such example can be obtained for $n = 2, w = 0.5, F_1 = \mathcal{N}(-1, 0.01), F_2 = \mathcal{N}(1, 4)$. In this case, the equally weighted quantile combination $F_w^q = \mathcal{N}(0, 1.1025)$ is also Gaussian \citep[see e.g.][]{LichtendahlEtAl2013} whereas the equally weighted linear pool is bimodal. The above inequality is satisfied, for example, on the interval $A = [-0.1, 0.1]$. 

\subsection*{Proposition \ref{prop1}}

\begin{eqnarray*}
	\sum_{i=1}^n w_i S_L(F_i, y) &=& \sum_{i=1}^n w_i \left(\mathbb{E}_{F_i}[L(X, y)] - \frac{1}{2}\mathbb{E}_{F_i}[L(X, \widetilde{X})]\right) \\
	&=& \mathbb{E}_{F_w}[L(X, y)] - \frac{1}{2}\mathbb{E}_{F_w}[L(X, \widetilde{X})] + \\
	&&  \mathbb{E}_{F_w}[L(X, \widetilde{X})] - \frac{1}{2} \mathbb{E}_{F_w}[L(X, \widetilde{X})] - \frac{1}{2} \sum_{i=1}^n w_i \mathbb{E}_{F_i}[L(X, \widetilde{X})] \\
	&=& S_L(F_w, y) + \\
	&& \underbrace{\sum_{i=1}^n w_i\left(\mathbb{E}_{F_i, F_w}[L(X, \widetilde{X})] - \frac{1}{2} \mathbb{E}_{F_w}[L(X, \widetilde{X})] - \frac{1}{2} \mathbb{E}_{F_i}[L(X, \widetilde{X})]\right)}_{= D},
\end{eqnarray*}
where the second equality uses the definition of the linear pool and the third equality uses the definitions in Equations (\ref{score}) and (\ref{disagreement_lp}). 

\subsection*{Proposition \ref{prop:equidist}}

We can write $\mathbb{E}_\mathbb{Q}(D) = \frac{1}{2}~w'\mathbf{ED}w$, where $\mathbf{ED}$ is an $n \times n$ matrix with $[i,j]$ entry given by $\mathbb{E}_\mathbb{Q}[d_{ij}].$ Under Assumption A, Proposition \ref{prop1} implies that the optimal combination weights maximize $w'\mathbf{ED}w$, as claimed in the first statement of Proposition \ref{prop:equidist}. Assumption B further implies that $\mathbf{ED} = (\tau\tau' -I_n)~d,$ where $\tau$ is an $n \times 1$ vector of ones, $I_n$ is the $n$-dimensional identity matrix and $d \in \mathbb{R}_+$ is a nonnegative scalar. Thus $w'\mathbf{ED}w = w'\tau\tau'w~d - w'wd = (1-w'w)~d.$ Hence the optimal combination weights minimize $w'w$, subject to the constraints of being nonnegative and summing to one. The latter problem is solved by equal weights, i.e. $w = \tau/n$. 

\subsection*{Remark \ref{prop:link}}
 
By Assumption A of Proposition \ref{prop1}, $\mathbb{E}_\mathbb{Q}[S_L(F_i, Y)] = \mathbb{E}_\mathbb{Q}[e_i^2] = c$ for all $i = 1,\ldots,n$. Under squared error loss, $d_{ij} = (\mu_i -\mu_j)^2 = (e_j-e_i)^2$. Hence $\mathbb{E}_\mathbb{Q}[d_{ij}] = \mathbb{E}_\mathbb{Q}[(e_j-e_i)^2] = \mathbb{V}_\mathbb{Q}(e_i-e_j) = 2~c-2~\mathbb{C}\text{ov}_\mathbb{Q}(e_i, e_j),$ where the penultimate equality uses the assumption that $\mathbb{E}_\mathbb{Q}(e_i) = \mathbb{E}_\mathbb{Q}(e_j) = 0$. The assumption of pairwise identical expected divergences is hence equivalent to pairwise identical covariances (and correlations) of forecast errors $e_i, e_j$. 

\subsection*{Proposition \ref{prop:dynamic}}

By definition of $w_{\mathcal{A}}^*$ and the fact that $\mathcal{B} \subseteq \mathcal{A},$ we have
\begin{eqnarray*}
\sum_{i=1}^n w_{\mathcal{A},i}^* \mathbb{E}_{\mathbb{Q}|\mathcal{A}}(S_i|\mathcal{A}) - \sum_{i=1}^{n-1} \sum_{j>i} w_{\mathcal{A},i}^*w_{\mathcal{A},j}^*\mathbb{E}_{\mathbb{Q}|\mathcal{A}}(d_{ij}|\mathcal{A}) &\le& \\
\sum_{i=1}^n w_{\mathcal{B},i}^* \mathbb{E}_{\mathbb{Q}|\mathcal{A}}(S_i|\mathcal{A}) -\sum_{i=1}^{n-1} \sum_{j>i} w_{\mathcal{B},i}^*w_{\mathcal{B},j}^*\mathbb{E}_{\mathbb{Q}|\mathcal{A}}(d_{ij}|\mathcal{A}),&& (\bigstar)
\end{eqnarray*}
$\mathbb{Q}$-almost surely. Using the law of iterated expectations, we obtain 
\begin{eqnarray*}
\mathbb{E}_\mathbb{Q}\left(\sum_{i=1}^n w_{\mathcal{A}, i}^* S_i - \sum_{i=1}^{n-1} \sum_{j>i} w_{\mathcal{A}, i}^*w_{\mathcal{A}, j}^*d_{ij}\right) &=& \mathbb{E}_\mathbb{Q}\left(
\mathbb{E}_{\mathbb{Q}|\mathcal{A}}\left(\sum_{i=1}^n w_{\mathcal{A}, i}^* S_i - \sum_{i=1}^{n-1} \sum_{j>i} w_{\mathcal{A}, i}^*w_{\mathcal{A}, j}^*d_{ij}{\vert} \mathcal{A}\right)\right) \\
&\le &
\mathbb{E}_\mathbb{Q}\left(
\mathbb{E}_{\mathbb{Q}|\mathcal{A}}\left(\sum_{i=1}^n w_{\mathcal{B}, i}^* S_i - \sum_{i=1}^{n-1} \sum_{j>i} w_{\mathcal{B}, i}^*w_{\mathcal{B}, j}^*d_{ij}{\vert} \mathcal{A}\right)\right) \\
&=& \mathbb{E}_\mathbb{Q}\left(\sum_{i=1}^n w_{\mathcal{B}, i}^* S_i - \sum_{i=1}^{n-1} \sum_{j>i} w_{\mathcal{B}, i}^*w_{\mathcal{B}, j}^*d_{ij}\right),
\end{eqnarray*}
where the inequality follows from $(\bigstar)$, and the second equality holds since $\mathcal{B} \subseteq \mathcal{A}$.

\subsection*{Proposition \ref{prop:under}}

From Proposition \ref{prop0} and Assumption \ref{ass:realism}, the expected entropy of the linear pool is given by 
\begin{equation*}
\mathbb{E}_\mathbb{Q}[\text{ent}(F_w, Y)] = \mathbb{E}_\mathbb{Q}[D] + \sum_{i=1}^n w_i~\mathbb{E}_\mathbb{Q}[S(F_i, Y)].\label{eq:first}
\end{equation*}
From Proposition \ref{prop1}, the expected score of the linear pool equals
\begin{equation*}
\mathbb{E}_\mathbb{Q}[S(F_w, Y)] = \sum_{i=1}^n w_i~\mathbb{E}_\mathbb{Q}[S(F_i, Y)] - \mathbb{E}_\mathbb{Q}[D].\label{eq:second}
\end{equation*}
Hence $\mathbb{E}_\mathbb{Q}[\text{ent}(F_w, Y)] - \mathbb{E}_\mathbb{Q}[S(F_w, Y)] = 2~\mathbb{E}_\mathbb{Q}[D] \ge 0$, as stated.

\subsection*{Proposition \ref{lpvsqp}}

Let $F$ be a generic CDF that is strictly increasing on its support $\Omega_F$. Its CRPS entropy is 
\begin{eqnarray}
\text{ent}(F) &=& \int_{-\infty}^\infty F(z) (1-F(z))~dz \nonumber\\
&=& \int_0^1 u(1-u)~dF^{-1}(u), \label{qent}
\end{eqnarray}
using the substitution $u = F(z)$. Consider a given set of forecasts $F_1, \ldots, F_n$ and a fixed weight vector $w$. Since $F^{-1,q}_w = \sum_{i=1}^n w_i F^{-1}_i(u)$, Equation (\ref{qent}) implies that $\text{ent}(F_w^q) = \sum_{i=1}^n w_i~\text{ent}(F_i)$. For the linear pool, Proposition \ref{prop0} implies that $\text{ent}(F_w) \le \sum_{i=1}^n w_i~\text{ent}(F_i)$. Taking expectations establishes the second statement of Proposition \ref{lpvsqp} (smaller expected entropy of quantile combination). To show the first statement (underconfidence of quantile combination), note that
\begin{eqnarray*}
\mathbb{E}_\mathbb{Q}\left(S(F_w^q, Y)\right) &<& \sum_{i=1}^n w_i~ \mathbb{E}_\mathbb{Q}\left(S(F_i, Y)\right) \\
&=& \sum_{i=1}^n w_i~ \mathbb{E}_\mathbb{Q}\left(\text{ent}(F_i)\right) \\
&=& \mathbb{E}_\mathbb{Q}\left(\text{ent}(F_w^q)\right).
\end{eqnarray*}
The strict inequality in the first line follows from \citet[Proposition 1]{LichtendahlEtAl2013}, which establishes a weak inequality, as well as Assumption C of Proposition \ref{lpvsqp}, which rules out the case of equality \citep[see][Proposition 3]{GrushkaEtAl2017}. The second line follows since the $n$ forecasts are realistic, and the last line follows from Equation (\ref{qent}). 

\subsection*{Proposition \ref{prop:qpac}}

By a conditioning argument similar to \citet[Proposition 2]{Krueger2024}, the expected score of an auto-calibrated forecast distribution $F$ is 
\begin{eqnarray*}
\mathbb{E}_\mathbb{Q}[S(F,Y)] &=& \mathbb{E}_\mathbb{Q} \int_{-\infty}^\infty S(F,Y)~d\mathcal{L}(Y|F)\\
&=& \mathbb{E}_Q \int_{-\infty}^\infty S(F,Y)~dF\\
&=& \mathbb{E}_Q[\text{ent}(F)],
\end{eqnarray*}
that is, an auto-calibrated forecast distribution is realistic in the sense of Definition \ref{ass:realism}. Under the stated conditions, quantile-based combination fails to be realistic for the CRPS (see Proposition \ref{lpvsqp}), and thus fails to be auto-calibrated.

\section{Computational Details}

\subsection{CRPS Divergence and Entropy for Gaussian Distributions}\label{crps_gaussian}

Suppose $F$ is a Gaussian distribution with mean $\mu$ and variance $\sigma^2$. Then the CRPS entropy of $F$ is given by
$$\text{ent}(F) = \text{ent}_\mathcal{N}(\mu, \sigma^2) = 0.5~\mathbb{E}|X-\widetilde{X}| = \frac{\sigma}{\sqrt{\pi}};$$
see e.g. \cite{HeldEtAl2010}. Based on similar calculations, the divergence between two Gaussian distributions $F_a, F_b$ is 
\begin{eqnarray*}
d_\mathcal{N}(\mu_a, \sigma^2_a, \mu_b, \sigma^2_b) &=& \mathbb{E}_{F_a, F_b} |X-\widetilde{X}| - 0.5~ \mathbb{E}_{F_a} |X-\widetilde{X}| - 0.5~ \mathbb{E}_{F_b} |X-\widetilde{X}| \\
&=& \sqrt{\sigma_a^2 + \sigma_b^2} \cdot \sqrt{\frac{2}{\pi}} \cdot \exp\left(-\frac{1}{2}~\frac{(\mu_a-\mu_b)^2}{(\sigma_a^2 + \sigma_b^2)}\right) \\
&&+ (\mu_a-\mu_b)\cdot\left(1-2~\Phi\left(-\frac{(\mu_a-\mu_b)}{\sqrt{\sigma_a^2 + \sigma_b^2}}\right)\right)\\
&&-  \frac{\sigma_a}{\sqrt{\pi}} -  \frac{\sigma_b}{\sqrt{\pi}},
\end{eqnarray*}
where $\Phi$ is the cdf of the standard normal distribution. 

\subsection{Empirical Weight Optimization} \label{emp_weight}

Let $\mathbf{S}$ be an $n \times 1$ vector with $i$th element $\mathbf{S}_i = \frac{1}{T} \sum_{t=1}^T S_{i,t},$ where $S_{i,t}$ denotes the score of method $i \in \{1,2,\ldots,n\}$ at date $t \in \{1,2,\ldots, T\}$. Let furthermore $\mathbf{D}$ denote an $n \times n$ matrix with elements $\mathbf{D}_{ij} = \frac{1}{T} \sum_{t=1}^T d_{ij,t},$ where $d_{ij,t}$ is the divergence between methods $i$ and $j$ at date $t$. For an empirical sample of size $T$, the optimal weights are given by
\begin{eqnarray}
\hat w &=& \underset{w \in \text{PS}^n}{\text{arg min}} \left(w'\mathbf{S} - \frac{1}{2} w'\mathbf{D}w\right), \label{v1}
\end{eqnarray}
which is the empirical analogue of Equation (\ref{constweights}). The optimization problem at (\ref{v1}) is not convex quadratic because the matrix $-\mathbf{D}$ is not positive definite. This may be a disadvantage in practice, where some optimization routines require a convex quadratic form. Fortunately, the following proposition shows that (\ref{v1}) is equivalent to a convex quadratic problem that can be used in practice. 

\begin{prop}
Let $\tau$ denote an $n \times 1$ vector of ones, and let $\mathbf{a} = \mathbf{S}-\mathbf{c}$ and $\mathbf{B} = \mathbf{c}\tau' + \tau \mathbf{c}' - \mathbf{D},$ where $\mathbf{c}$ is an $n \times 1$ vector with $i$th element $\mathbf{c}_i = \frac{1}{T} \sum_{t=1}^T S(F_{i,t}, 0),$ i.e. the average score of method $i$ for hypothetical zero outcomes $y_1 = y_2 = \ldots = y_T = 0$. Then the objective function
\begin{eqnarray}
\left(w'\mathbf{a} + \frac{1}{2} w'\mathbf{B}w\right), \label{v2}
\end{eqnarray}
(i) is equivalent to the objective function at (\ref{v1}) and (ii) takes the form of a convex quadratic program, i.e. the matrix $\mathbf{B}$ is positive definite. 
\end{prop}
\begin{proof}
(i) follows by substituting the definitions of $\mathbf{a}$ and $\mathbf{B}$ into (\ref{v2}) and using the fact that $\tau'w = 1$. To show (ii), denote the $i,j$ element of $\mathbf{B}$ by 
$$\mathbf{B}_{ij} = \frac{1}{T} \sum_{t=1}^T b_{ij,t},$$
where 
\begin{eqnarray*}
b_{ij,t} &=& \mathbb{E} L(X_{i,t}, 0) - 0.5~\mathbb{E} L(X_{i,t}, \widetilde X_{i,t}) + \mathbb{E} L(X_{j,t}, 0) - 0.5~\mathbb{E} L(X_{j,t}, \widetilde X_{j,t}) - d_{ij,t}\\
&=& \mathbb{E} L(X_{i,t}, 0) + \mathbb{E} L(X_{j,t}, 0) - \mathbb{E} L(X_{i,t}, \widetilde X_{j,t}),
\end{eqnarray*}
where $X_{i,t}$ is distributed according to $F_{i,t}$. By comparison with \citet[Proposition 2]{AllenEtAl2024}, this matrix $\mathbf{B}$ corresponds to the weight optimization problem for the kernel function $k(a, b) = L(a, 0) + L(b, 0) - L(a, b).$ Since $L$ is negative definite and $L(0, 0) = 0$, the kernel function $k$ is positive definite \citep[Lemma 12]{SejdinovicEtAl2013}, so that the objective function at (\ref{v2}) is convex quadratic. 

\end{proof}

\section{Additional Figures for Section \ref{sec:empirical}} \label{sec:addfig}

\begin{figure}[!htbp]
	\includegraphics[width=\textwidth]{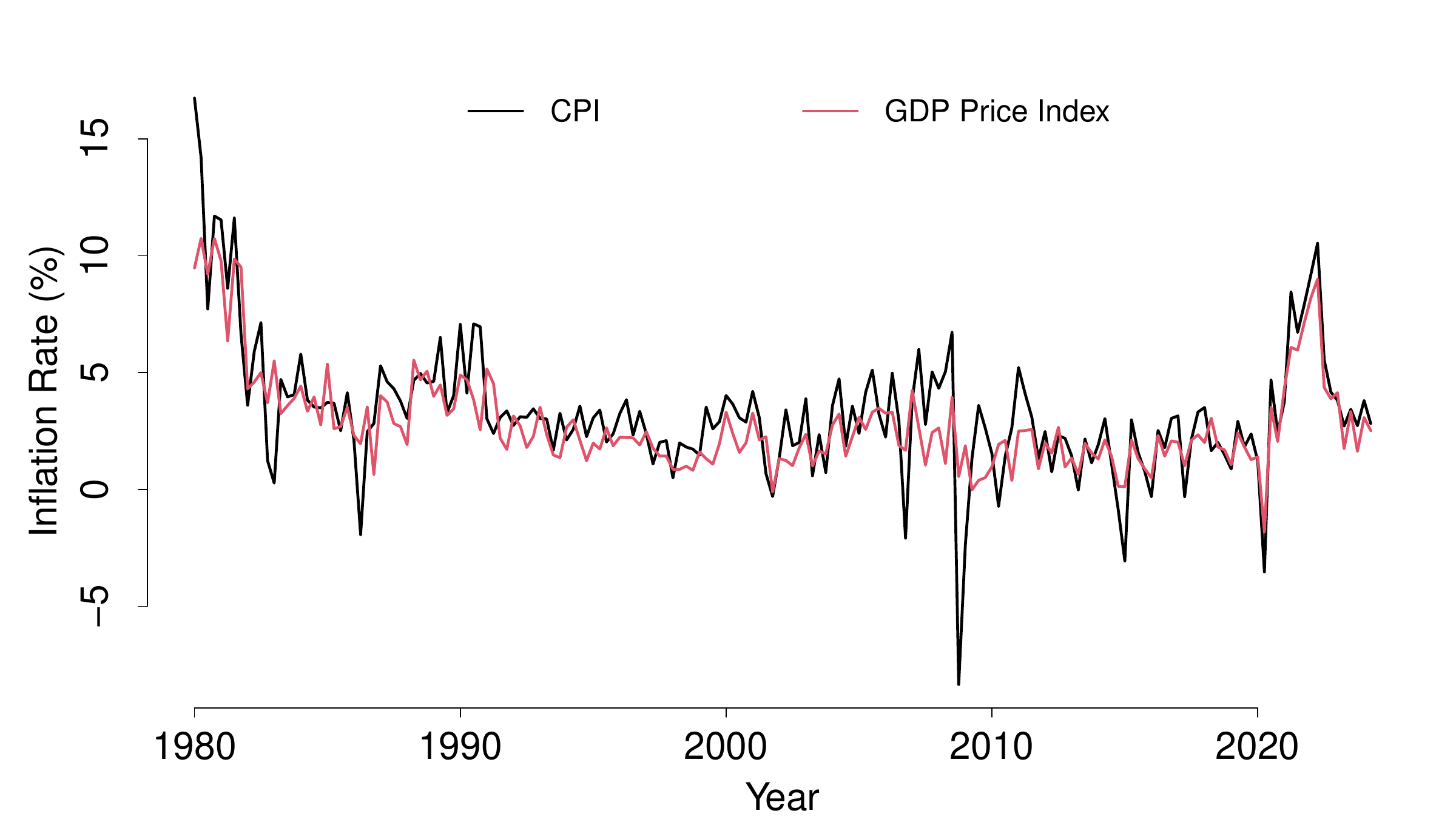}
	\caption{Time series of the two inflation measures considered in Section \ref{sec:illustration}. Inflation rates are computed as annualized quarterly growth rates of the underlying index. For each quarter, we use the second vintage available in the Philadelphia Fed's real-time database. \label{fig:inflation_actuals}}
\end{figure}

\begin{figure}
	\includegraphics[width=\textwidth]{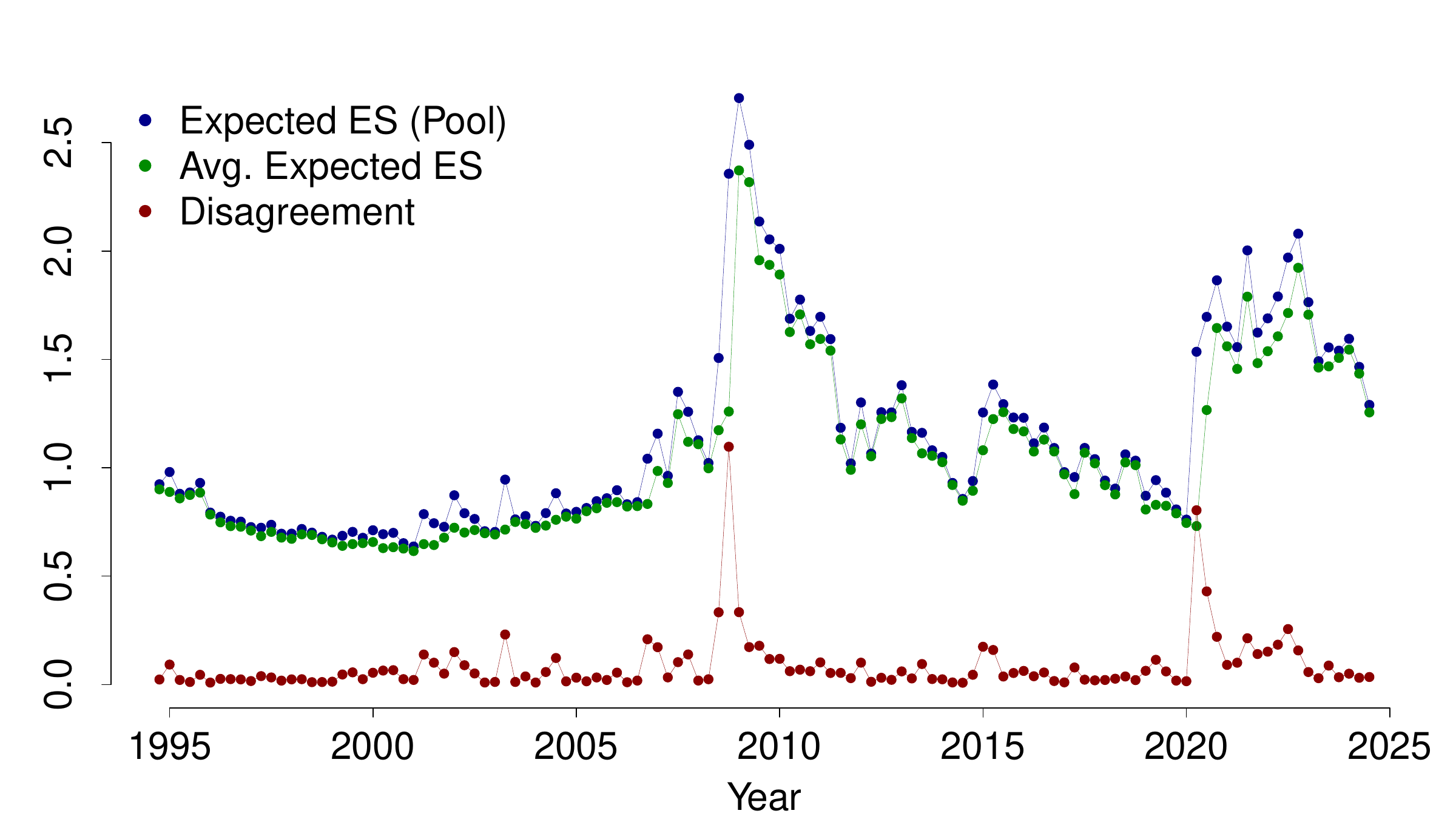}
	\caption{Illustration of the decomposition in Equation (\ref{entropy_pool}). The figure shows the equally weighted linear pool's expected Energy Score and its components for current-quarter forecast distributions of inflation. \label{fig:es_spf_bvar_h1}}
\end{figure}

\begin{figure}
	\includegraphics[width=\textwidth]{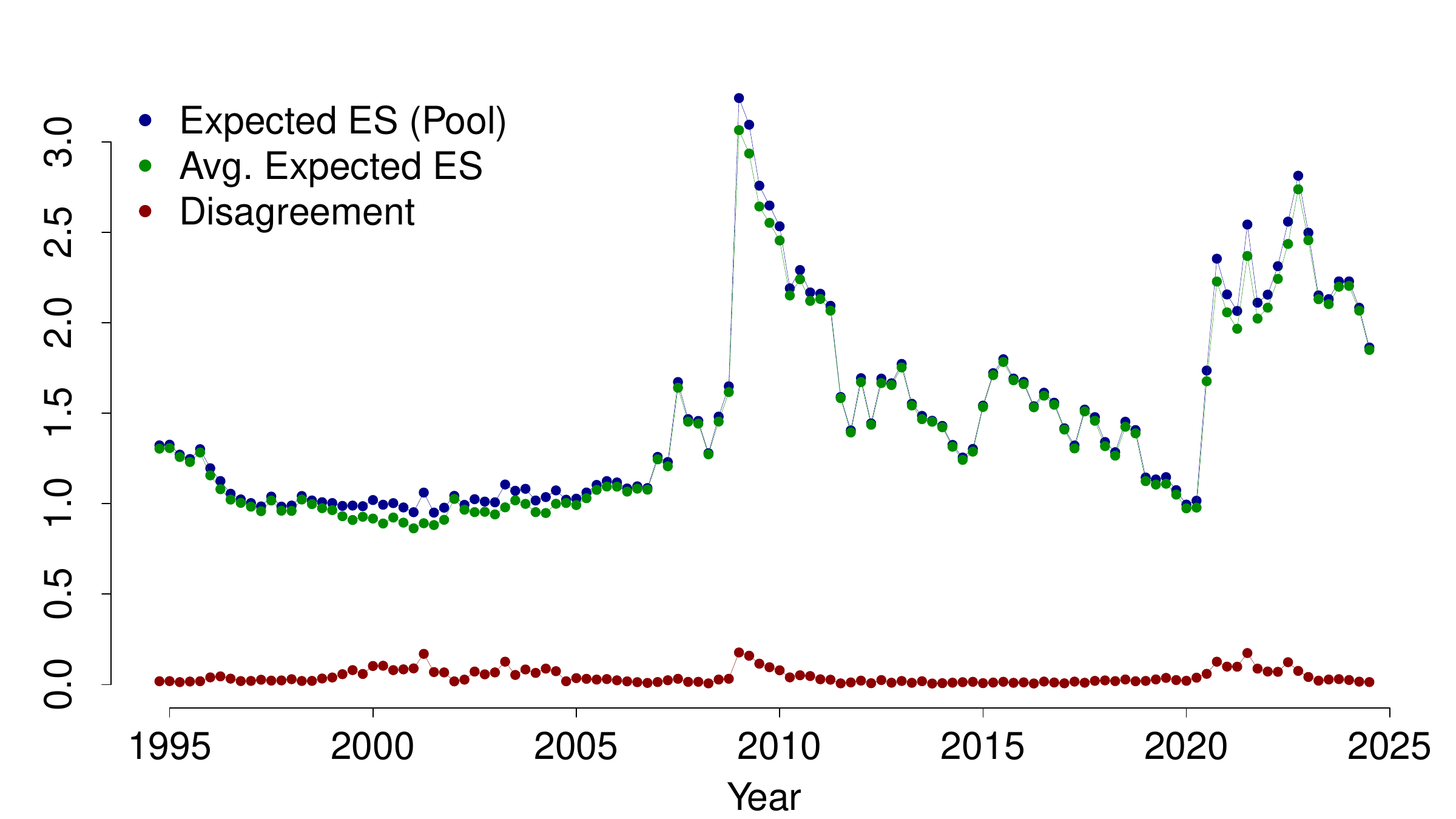}
	\caption{Like Figure \ref{fig:es_spf_bvar_h1}, but for horizon $h = 5$. \label{fig:es_spf_bvar_h5}}
\end{figure}

\begin{figure}
\includegraphics[width=\textwidth]{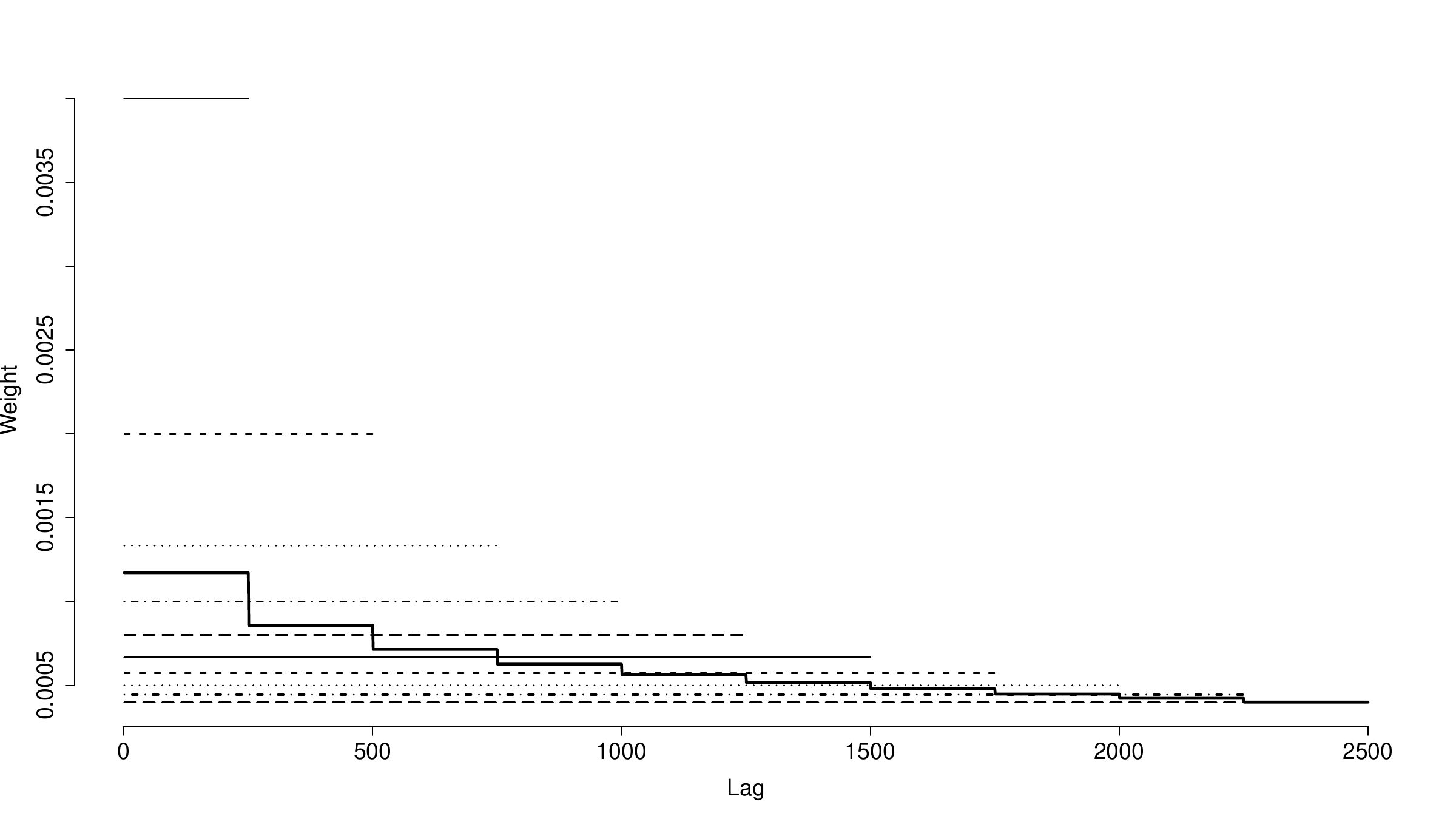}
\caption{Plot of weight functions against lag order. Lag order $j$ denotes observation $y_{t-1-j}$, where $t-1$ is the date at which the forecast is formed. Ten horizontal lines correspond to individual choices. Step function corresponds to equally weighted mean of the ten weight functions. \label{fig:lags}}
\end{figure}

\end{appendix}

\end{document}